\title{Fully Dynamic Shortest Paths in Sparse Digraphs\thanks{This work has been partially supported by the ERC CoG grant TUgbOAT no 772346.}
\footnote{The initial version of this paper has been published in ICALP 2023~\cite{KarczmarzS23} and claimed, apart from the main fully dynamic APSP result, a secondary result on fully dynamic reachability for \emph{general} sparse digraphs: a data structure with $\widetilde{O}(n\sqrt{m})$ worst-case update time and $O(\sqrt{m})$ query time. Unfortunately, this latter result, and in particular the statement we made about sparse matrix inverse on the way, is incorrect. We are grateful to Jan van den Brand (personal communication) for pointing out the error to us. We discuss the issue in more detail in Section~\ref{sec:error}. It is worth noting that the approach we used still works for DAGs (see Section~\ref{sec:dag}). }}
\date{\vspace{-5ex}}
\author[1]{Adam Karczmarz}
\author[1]{Piotr Sankowski}
\affil[1]{Institute of Informatics, University of Warsaw and IDEAS NCBR, Poland}
\theoremstyle{plain}
\newtheorem{theorem}{Theorem}[section]
\newtheorem{lemma}[theorem]{Lemma}
\newtheorem{remark}[theorem]{Remark}
\def\poly{\operatorname{poly}}
\begin{document}

\maketitle

\newcommand{\Ot}{\ensuremath{\widetilde{O}}}
\newcommand{\eps}{\ensuremath{\epsilon}}
\newcommand{\dist}{\delta}
\newcommand{\len}{\ell}
\newcommand{\wei}{w}
\newcommand{\field}{\mathbb{F}}

\begin{abstract}
  We study the exact fully dynamic shortest paths problem. For real-weighted directed graphs, we show
  a deterministic fully dynamic data structure with $\Ot(mn^{4/5})$ worst-case update time processing arbitrary $s,t$-distance queries in
  $\Ot(n^{4/5})$ time.
  This constitutes the first non-trivial update/query tradeoff for this problem in the regime of \emph{sparse weighted} directed graphs.
\end{abstract}

\section{Introduction}
Computing
all-pairs shortest paths (APSP) is among
the most fundamental algorithmic problems on directed graphs.
This classical problem is often generalized into a data structure ``oracle'' variant: given a graph~$G$, preprocess
$G$ so that efficient point-to-point distance or shortest paths queries are supported.
Computing APSP can be viewed as an extreme solution to the oracle variant;
if one precomputes the answers to all the $n^2$ possible queries in $\Ot(nm)$ time, the queries can be answered in constant time.
The other extreme solution is to not preprocess $G$ at all and run near-linear-time Dijkstra's algorithm
upon each query.
Interestingly, for general directed weighted graphs, no other tradeoffs for the exact oracle variant of static APSP beyond these trivial ones are known.

In this paper, we consider the exact APSP problem, and its easier relative \emph{all-pairs reachability} (or, in other words, \emph{transitive closure}), in the \emph{fully dynamic} setting,
where the input graph $G$ evolves by both edge insertions and deletions.

\subsection{Prior work}
There has been extensive previous work on APSP and transitive closure in the fully dynamic setting.
Notably,
Demetrescu~and~Italiano~\cite{DemetrescuI04}
showed that APSP in a real-weighted digraph can be maintained deterministically
in $\Ot(n^2)$ amortized time
per vertex update (changing all edges incident to a single vertex).
Thorup~\cite{Thorup04}
later slightly improved and simplified their result.
These data structures maintain an explicit distance matrix and the corresponding collection of shortest paths, and thus
allow querying distances and shortest paths in optimal time.
Similar amortized bounds have been earlier obtained for transitive closure~\cite{DemetrescuI05, King99, Roditty08} albeit
using different combinatorial techniques.
Polynomially worse (but nevertheless subcubic) \emph{worst-case} update bounds for real-weighted fully dynamic APSP are also known:
randomized $\Ot(n^{2+1/2})$~\cite{Mao24a} and deterministic $\Ot(n^{2+41/61})$~\cite{ChechikZ23}.

For dense \emph{unweighted} digraphs, non-trivial fully dynamic data structures for all-pairs reachability and APSP
can be obtained using algebraic techniques. Via a reduction to dynamic matrix inverse,
Sankowski~\cite{Sankowski04}
obtained $O(n^2)$ \emph{worst-case} update bound
for explicitly maintaining the transitive closure, and also gave update/query tradeoffs.
In particular,~he showed a reachability data structure with subquadratic
$O(n^{1.529})$ update time and sublinear $O(n^{0.529})$ query time.
Using the same general algebraic framework,
van den Brand, Nanongkai, and Saranurak~\cite{BrandNS19}
showed $O(n^{1.407})$ worst-case update
bound for $st$-reachability (that is, fixed single-pair reachability), whereas
van den Brand, Forster, and Nazari~\cite{BrandFN22} gave an $O(n^{1.704})$ worst-case update
bound for maintaining exact $st$-distance in unweighted digraphs.\footnote{The single-pair data structures~\cite{BrandNS19, BrandFN22} can be easily extended to support arbitrary-pair queries. Then, the query time matches the update time.}  That framework, however, inherently leads to Monte Carlo randomized solutions and does
not generally allow reporting (shortest) paths within the stated query bounds.\footnote{As shown quite recently, reporting (shortest) paths in subquadratic time
can be possible via a combination of algebraic and combinatorial techniques~\cite{AlokhinaB24, BergamaschiHGWW21, Karczmarz0S22}. However, this comes with a polynomial time overhead.}

Interestingly, neither the known fully dynamic APSP data structures for \emph{real-weighted} digraphs (or even for integer weights between $1$ and $n$) nor
the algebraic data structures tailored to dense graphs yield any improvement over the extreme
recompute-from-scratch approaches for \emph{sparse} graphs with $m=\Ot(n)$. This is especially unfortunate as such graphs
are ubiquitous in real-world applications.
Indeed, for ${m=\Ot(n)}$, recomputing APSP from scratch takes $\Ot(n^2)$ worst-case update time and $O(1)$ query time (which matches the amortized bound in~\cite{DemetrescuI05,Thorup04}),
whereas naively running Dijkstra's algorithm upon query costs $\Ot(n)$ time (which already improves upon the update bound of the algebraic $st$-distance data structure of~\cite{BrandFN22}).
The only non-trivial fully dynamic APSP data structure in the sparse regime has been
described by
Roditty and Zwick~\cite{RodittyZ11}. Their randomized data structure has $\Ot(m\sqrt{n})$ amortized update
time and $O(n^{3/4})$ query time. Unfortunately, it works only for \emph{unweighted digraphs}.
To the best of our knowledge, no non-trivial update/query tradeoffs for fully dynamic
APSP in sparse weighted digraphs have been described to date.
A step towards this direction has been made by
Karczmarz~\cite{Karczmarz21}
who showed that some \emph{fixed} -- in a crucial way -- $m$
distance pairs can be maintained in $\Ot(mn^{2/3})$ worst-case time per update.

For the simpler fully dynamic reachability problem, the $O(n^{1.529})$ update time and $O(n^{0.529})$ query time
algebraic tradeoff of~\cite{Sankowski04} is already non-trivial for all graph densities.
However, specifically for sparse graphs, a deterministic and combinatorial tradeoff of Roditty~and~Zwick~\cite{RodittyZ08}
is more efficient; they showed a data structure with $O(m\sqrt{n})$ amortized update time
and $O(\sqrt{n})$ query time. Moreover, the data structure of~\cite{Sankowski04} requires fast matrix multiplication algorithms~\cite{WilliamsXXZ24}
and these are considered impractical. That being said, the downside
of~\cite{RodittyZ08} is that the update bound holds only in the amortized sense.

\subsection{Our result}
We show the first fully dynamic APSP data structure with
non-trivial update and query bounds for \emph{sparse weighted} digraphs.

\begin{restatable}{theorem}{tshpath}\label{t:shpath}
  Let $G$ be a real-weighted directed graph. There exists a deterministic data structure maintaining $G$
  under fully dynamic vertex updates and answering arbitrary $s,t$-distance queries with $\Ot(mn^{4/5})$ worst-case update
  time and $\Ot(n^{4/5})$ query time and using $\Ot(n^2)$ space. The queries are supported only when $G$ has no negative cycles.
  After answering~a distance query, some corresponding shortest path $P=s\to t$
  can be reported in $O(|P|)$ time.
\end{restatable}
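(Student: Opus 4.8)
The plan is to reduce the fully dynamic problem, via lazy rebuilding, to maintaining a few carefully chosen families of shortest paths over a \emph{slowly changing} graph, and then to answer an arbitrary query by stitching together $\Ot(n^{4/5})$ precomputed or maintained pieces. Fix a rebuild period $\rho$ and a hop parameter $h$, to be balanced at the end, and split the stream of vertex updates into phases of $\rho$ updates each. At the start of a phase I would freeze $G$ as the \emph{clean graph} $G_0$; since a vertex update touches only edges incident to one vertex, after any number of updates within the phase the set $U$ of \emph{dirty} vertices (those updated in the current phase) satisfies $|U|\le\rho$, and every edge not incident to $U$ is exactly as in $G_0$. Spread evenly over the $\rho$ updates of the phase — so that all bounds remain \emph{worst-case} — I would recompute from scratch (i) a feasible price function $\pi$ for $G_0$ via Bellman--Ford, which along the way detects a negative cycle if present; (ii) a deterministic hitting set $S$, $|S|=\Ot(n/h)$, meeting some shortest path of every ordered pair whose minimum-hop shortest path in $G_0$ uses more than $h$ edges, together with the distances $\delta_{G_0}(S,V)$ and $\delta_{G_0}(V,S)$; and (iii) a hop-bounded distance table $\delta^{\le h}_{G_0}$ (with predecessor witnesses) for the local part. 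Using $\pi$ this is all essentially non-negative-weight work: (i) is one Bellman--Ford, (ii) is $n$ truncated Dijkstras plus a greedy set cover over $O(n^2)$ length-$h$ vertex sequences, and (iii) is $O(h)$ Bellman--Ford sweeps. Storing the tables accounts for the $\Ot(n^2)$ space bound.

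On top of the clean structure, throughout the phase I would maintain, in the \emph{current} graph $G$, the exact distance and one shortest path between every ordered pair of \emph{interface} vertices $I:=S\cup U\cup N_G(U)$, where $N_G(U)$ collects the endpoints of the currently present dirty edges. The crucial points are that $I$ changes by only one vertex's neighbourhood per update, that $\pi$ keeps every clean edge non-negatively reduced for the entire phase, and that only edges incident to $U$ (a set of size $\le\rho$) can have negative reduced weight, so distances in $G$ are obtained from clean distances by an $O(\rho)$-bounded Bellman--Ford-style correction around $U$. This is precisely the setting handled with worst-case guarantees by the machinery behind Karczmarz~\cite{Karczmarz21} for maintaining a designated set of distance pairs under vertex updates, now applied to the $\Ot(n/h+\rho)$ interface ``sources''; the resulting per-update cost, as a function of $\rho,h,m,n$, together with the rebuild cost $\Ot\!\big(\tfrac{nmh}{\rho}\big)$, is what I would balance against the target $\Ot(mn^{4/5})$.

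A distance query $s\to t$ is then answered by forming an auxiliary graph $H$ on vertex set $\{s,t\}\cup I$ whose arcs are (a) the current dirty edges incident to $U$, (b) the maintained interface-to-interface distances in $G$, and (c) for arcs touching $s$ or $t$, the relevant values read off from $\delta^{\le h}_{G_0}$, $\delta_{G_0}(\cdot,S)$, $\delta_{G_0}(S,\cdot)$ — composed so as to capture clean subpaths that avoid $U$ — supplemented by a short hop-bounded local search from $s$ and into $t$. By the hitting-set property (a long clean subpath of a shortest $s$–$t$ path in $G$ passes through $S$), by the clean/dirty edge split (a shortest path decomposes at its dirty vertices into clean subpaths), and by feasibility of $\pi$, one gets $\delta_H(s,t)=\delta_G(s,t)$; running a shortest-path computation on $H$, which has $\Ot(n/h+\rho)$ vertices, yields the $\Ot(n^{4/5})$ query bound for the balanced parameters. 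For path reporting I would attach to every maintained or precomputed distance a predecessor witness or a reference to the two sub-distances it was composed from; a queried distance then unfolds recursively through $H$, the hub distances, the $h$-bounded table, and the interface data structure into an explicit walk, in $O(|P|)$ time after a short argument that the recursion is acyclic and outputs a simple path when $G$ has no negative cycle.

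The hard part will be the quantitative core of the second and third steps: choosing $\rho$ and $h$ so that the rebuild cost, the cost of maintaining the interface distances in the style of~\cite{Karczmarz21}, and the query cost are simultaneously $\Ot(mn^{4/5})$ and $\Ot(n^{4/5})$ — this is what forces the $4/5$ exponents. A further subtlety I expect to need care is that the hitting set $S$ is computed for $G_0$, whereas the clean subpaths arising in a query live in $G_0\setminus U$; a long shortest path of $G_0\setminus U$ need not meet $S$. This must be reconciled either by keeping $\rho$ small enough relative to $h$ that the correction around $U$ already captures such subpaths, or by having the $h$-bounded / hub data cover clean subpaths of hop-length $\le h$ while the interface data structure (which sees $U$) covers the rest — and making this dichotomy airtight, together with the negative-weight bookkeeping under $\pi$, is the technical heart of the argument.
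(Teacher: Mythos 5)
Your high-level skeleton (phases, a clean graph $G_0$, a hitting set for long paths, an $h$-hop table for short paths, and a small interface graph at query time) has the right flavor, but there are several concrete gaps that would sink the bounds. The most immediate is the size of the interface set $I=S\cup U\cup N_G(U)$: even when $|U|\le\rho$ is small, $N_G(U)$ is the set of all endpoints of edges incident to $U$, and a single high-degree dirty vertex makes $|N_G(U)|=\Theta(n)$; so $|I|$ is not $\Ot(n/h+\rho)$ and the auxiliary graph $H$ has $\Theta(n^2)$ arcs, breaking the query bound. A second, deeper problem is that ``maintaining, in the current graph $G$, the exact distance between every ordered pair of interface vertices'' is not a step you get for free: this is itself a fully dynamic multi-pair shortest paths problem, and (as the paper notes) the Karczmarz data structure you want to invoke works for a \emph{fixed} set of pairs chosen by the data structure, whereas $I$ changes adversarially on every update. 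There is no argument that this subproblem is easier than the original, so the reduction is circular. The paper avoids this entirely: it never maintains anything across updates inside a phase, but instead, after each update, runs Dijkstra \emph{from scratch} from and to the sets $C\cup D$ and $H$, whose sizes are carefully bounded by $\Ot(nmh/\tau)$, $\Delta$, and $\Ot(n/h)$ respectively.

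You also correctly flag the key subtlety, that a hitting set computed for $G_0$ is useless for shortest paths in $G_0$ with dirty vertices removed, but you do not resolve it. The paper's resolution requires two ideas you don't have. First, a \emph{degree-weighted congestion} scheme: during preprocessing it chooses a congested set $C$ so that every vertex $v$ satisfies $\alpha(v)=\sum_{\pi_{s,t}\ni v}\deg(t)\le\tau$; this bounds, by $\Ot(\Delta\tau)$, the total Dijkstra work needed after each update to rebuild replacements $\pi'_{s,t}$ for the precomputed short paths destroyed by deletions (Lemma~\ref{l:dijkstra-rebuild}). Without congestion control, a single vertex might lie on $\Theta(n^2)$ precomputed paths and a deletion would be catastrophically expensive. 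Second, because the hitting set must be refreshed after \emph{every} update (for the surviving plus rebuilt paths $\Pi'$), the paper needs a deterministic hitting-set algorithm whose running time is near-linear in the total tree size $N$ and independent of the hop parameter $h$ (Theorem~\ref{t:hitting-tree}); the standard greedy algorithm, or King's tree variant, is a factor $h$ or $|\mathcal{Z}|n/N$ too slow in the sparse regime. Your proposal has no substitute for either of these, and neither the ``keep $\rho$ small relative to $h$'' nor the ``dichotomy'' alternatives you sketch in the last paragraph yield a mechanism that bounds the per-update work once dirty vertices start destroying precomputed short paths.
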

Compared to the data structure of
Roditty and Zwick~\cite{RodittyZ11}
for the unweighted case, our obtained update/query bounds are polynomially higher.
However, our data structure has some very significant advantages.
It is deterministic, handles real-edge-weighted graphs (possibly with negative edge weights and negative cycles),
and the update time bounds holds in the worst case, as opposed to only in the amortized sense~in~\cite{RodittyZ11}.
Moreover, if path reporting is required, the bounds in~\cite{RodittyZ11} hold only against an oblivious adversary.
We also remark that a~slightly more efficient variant of Theorem~\ref{t:shpath}, with $\Ot(mn^{3/4})$ worst-case update
time and $\Ot(n^{3/4})$ query time, can be obtained for the unweighted case.

The near-quadratic space requirement in Theorem~\ref{t:shpath} is clearly undesirable in the sparse setting, but
also applies to all the other known fully dynamic reachability and shortest paths data structures.
Moreover, this phenomenon is not specific to the dynamic setting.
To the best of our knowledge, even for the \emph{static} transitive closure problem, it is not
known whether one can preprocess a general sparse directed graph into a data structure of size $O(n^{2-\eps})$
supporting arbitrary reachability queries in $O(n^{1-\eps})$ time.\footnote{Such a tradeoff is possible, for example,
if the graph has a sublinear \emph{minimum path cover}, see, e.g.,~\cite{MakinenTKPGC19}.}

\subsection{Technical overview}
\paragraph{Shortest paths.} In order to obtain a basic randomized variant of Theorem~\ref{t:shpath}, we combine ideas from the known data structures for fully dynamic APSP
with subcubic worst-case update bound~\cite{AbrahamCK17, GutenbergW20b, Karczmarz21}.
These data structures all build upon hitting set arguments~(dating back to the work of Ullman~and~Yannakakis~\cite{UY91})
yielding a sublinear $\Ot(n/h)$-sized set of vertices of the graph that lie on the shortest paths whose number of edges (\emph{hops}) is at least $h=\poly{n}$.
With this in hand, the main challenge is to recompute pairwise \emph{small-hop} shortest paths, i.e., those with at most $h$ hops, under edge deletions.
As usual, edge insertions are rather easy to handle since the potential new paths created by insertions
necessarily pass through the inserted edges' endpoints.

For efficient recomputation of small-hop paths, our data structure once in a while chooses a collection~$\Pi$ of~$n^2$ pairwise $\leq h$-hop paths in $G$, and a set $C\subseteq V$ of \emph{congested vertices} of truly sublinear (in~$n$) size,
so that the chosen paths are at least as short as shortest $\leq h$-hop paths in $G-C$ (i.e., the graph~$G$ with edges incident to the vertices $C$ removed). The congested vertices are picked in such a way that no individual vertex $v\in V$ appears on the chosen paths too often. As a result, the number of precomputed
paths destroyed by a vertex deletion that have to be restored is bounded. This idea is due to
Probst Gutenberg and Wulff-Nilsen~\cite{GutenbergW20b}.
However, as opposed to~\cite{GutenbergW20b}, we cannot afford to recompute shortest $\leq h$-hop paths upon update in a hierarchical way which is inherently quadratic in $n$
(albeit advantageous in the case of dense graphs).
Instead, recomputation upon deletions is performed using a Dijkstra-like procedure~(as in~\cite{AbrahamCK17}), crucially with the sparsity-aware enhancements of~\cite{Karczmarz21} (such as the degree-weighted congestion scheme).
These techniques, combined with the standard random hitting set argument~\cite{UY91} are enough to get the stated bounds, albeit Monte Carlo randomized.

\paragraph{Derandomization.} Randomization above is only required for the sake of the hitting set argument. Curiously, we do not (and do not know how to) exploit
the often-used property that a random hitting set, once sampled, is valid through multiple versions of the evolving graph
as long as the adversary is oblivious to the hitting set. Therefore, we may as well sample the hitting set from scratch
after each update. This is as opposed to~\cite{AbrahamCK17, Karczmarz21}, where avoiding that
leads to polynomially better bounds.
If a fresh hitting set can be used upon each update, the standard derandomization method
is to use a folklore greedy algorithm (see Lemma~\ref{l:hitting}) for constructing a minimum-size hitting set that is $O(\log{n})$-approximate,
first used in the context of static and dynamic APSP algorithms in~\cite{King99, Zwick02}.
The greedy algorithm runs in linear time in the input size. For constructing a hitting set of explicitly given pairwise
$\leq h$-hop paths, this gives an $O(n^2h)$ time bound per update. This is enough for deterministic variants
of~\cite{AbrahamCK17}~and~\cite{Zwick02}. However, the incurred cost is prohibitive in the sparse case.

Derandomization of our data structure without a polynomial slowdown turns out to be non-trivial and requires
some new tools.
First, when precomputing $\leq h$-hop paths $\Pi$,
we construct a hitting set $H_0$ of those paths in $\Pi$ that have $\Theta(h)$ hops. When $G$ is subject to deletions, $H_0$ hits the precomputed paths
in $\Pi$ that are not destroyed as a result of deletions.
Hence, in order to lift $H_0$ into a hitting set after an update, it is enough to extend it so that it hits all the restored paths.
If we wanted to run the greedy algorithm on the restored paths, the data structure would suffer from a factor-$h$ polynomial slowdown.
This is because the representation of the restored paths (constructed using Dijkstra's algorithm) can be computed more efficiently
that their total hop-length  and encoded using a collection of shortest paths trees $\mathcal{Z}$.
The goal can be thus achieved by finding a hitting set of all $\Theta(h)$-hop root-leaf paths in~$\mathcal{Z}$.
King~\cite{King99}~gave a variant of the aforementioned deterministic greedy algorithm precisely for this task.
The algorithm of~\cite{King99} runs in $O(\min(Nh,|\mathcal{Z}|n))$ time, where $N$ denotes the total size of trees in $\mathcal{Z}$.
While this is optimal when $\mathcal{Z}$ contains $\Theta(n)$ trees of size $\Theta(n)$ (as required in~\cite{King99}),
for small enough $N$ and large enough $|\mathcal{Z}|$, this is not better than the standard greedy algorithm which could
complete the task in $O(Nh)$ time.

We deal with this problem
by designing a novel near-optimal deterministic algorithm computing an $\Ot(n/h)$-sized hitting set of
$h$-hop root-leaf path in a collection of trees
that runs in $O(N\log^2{N})$ time independent of~$h$ (see Theorem~\ref{t:hitting-tree}).
We believe that this algorithm might be of independent interest. The main idea here is to
simulate the greedy algorithm only approximately, which enables taking advantage of dynamic tree data structures~\cite{AlstrupHLT05}.

\subsection{The error in the conference version~\cite{KarczmarzS23}}\label{sec:error}
The initial version of this paper~\cite{KarczmarzS23} (published in ICALP 2023) also claimed a secondary result for fully dynamic reachability. Specifically, in Theorem 2 therein we claimed that for any $t\in[1,\sqrt{m}]$ there exists a Monte Carlo randomized data structure maintaining
  $G$ subject to fully dynamic single-edge updates with $\Ot(mn/t)$ worst-case update time and supporting
  arbitrary-pair reachability queries in $O(t)$ time.
  For sparse graphs, such a data structure would (1) improve upon upon $O(n^{1.529})$/$O(n^{0.529})$ worst-case update/query bounds obtained in~\cite{Sankowski04} and (2) match the amortized update bound of~\cite{RodittyZ08} in the worst case.

Unfortunately, the argument used in~\cite{KarczmarzS23} is flawed. More specifically, the error lies in \cite[Theorem~13]{KarczmarzS23}, where we claim that the inverse of a non-singular sparse matrix $A$ with $m$ non-zero entries over a sufficiently large finite field can be computed in $\Ot(nm)$ time. In the proof sketch provided, it is argued that this follows by the application of Baur-Strassen Theorem~\cite{BaurS83} on top of a variant of the parallel determinant algorithm of Kaltofen and Pan~\cite{KaltofenP91} that implies a randomized algebraic circuit (straight-line program) computing the determinant of $A$ in $\Ot(nm)$ time. Whereas indeed, applying the Baur-Strassen Theorem to the algorithm of~\cite{KaltofenP91} yields an efficient matrix inversion algorithm in the dense case where each matrix element can be treated as the input of the circuit, the sparse variant of the circuit has only $m$ inputs. Therefore, applying the Baur-Strassen theorem can only yield $m$ elements of the inverse $A^{-1}$, when \emph{all $n^2$ elements of $A^{-1}$ are required}.
This deems the proof of the the $\Ot(nm)$ sparse matrix inverse bound~\cite[Theorem~13]{KarczmarzS23} incorrect and consequently invalidates the reachability data structure of~\cite{KarczmarzS23}.
We thank Jan van den Brand (personal communication) for pointing out the shortcoming. 

We note that the flawed data structure in~\cite{KarczmarzS23} can be patched to work for DAGs. Essentially, this follows from a simple fact that in a sparse DAG, all-pairs \emph{path counts} modulo a prime of order $\poly(n)$ can be computed in $O(nm)$ time. For details, see Section~\ref{sec:dag}.

The complexity of sparse matrix inversion over a field remains an interesting problem. To the best
of our knowledge, the current best known bound is $O(n^{2.214})$~\cite{CasacubertaK22} and it holds
even if we only assume that one can apply the input matrix to a vector in a black-box way in $\Ot(n)$ time.

\subsection{Further related work}
Exact all-pairs shortest paths in unweighted graphs have been studied also in partially dynamic settings: incremental~\cite{AusielloIMN91} and decremental~\cite{BaswanaHS07, EvaldFGW21}.
Fully dynamic data structures are also known for $(1+\eps)$-approximate distances in weighted directed graphs~\cite{Bernstein16, BrandN19}.
A significant research effort has been devoted to finding fully- and partially dynamic (approximate) all-pairs shortest paths data structures
for \emph{undirected} graphs, e.g.,~\cite{BernsteinGS21, ChenGHPS20, Chuzhoy21, ChuzhoyZ23, ForsterGNS23, ForsterNP22, abs-2402-18541, BrandFN22}.

Dynamic reachability and shortest paths problems have also been studied from the perspective
of conditional lower bounds~\cite{AbboudW14, GutenbergWW20, HenzingerKNS15, RodittyZ11, BrandNS19}.

\section{Preliminaries}\label{s:prelims}
We work with directed graphs $G=(V,E)$. We denote by $\wei_G(e)\in\mathbb{R}$ the weight of
an edge $uv=e\in E$.
The graph $G$ is called \emph{unweighted} if $\wei_G(e)=1$ for all $e\in E$.
If the graph whose edge we refer to is clear from the context, we may sometimes skip the subscript
and write~$\wei(e)$. For simplicity, we do not allow parallel directed edges between the same endpoints of~$G$,
as those with non-minimum weights can be effectively ignored in reachability and shortest paths problems we study.
As a result, we sometimes write $\wei_G(uv)$ or $\wei(uv)$.

For $u,v\in V$, an $u\to v$ path $P$ in $G$ is formally a sequence of vertices $v_1\ldots v_k\in V$, where $k\geq 1$, $u=v_1$, $v=v_k$,
such that $v_iv_{i+1}\in E$ for all $i=1,\ldots,k-1$.
The hop-length $|P|$ of~$P$ equals $k-1$. The length $\len(P)$ of $P$ is defined
as $\sum_{i=1}^{k-1}\wei_G(v_iv_{i+1})$. $P$ is a \emph{simple path} if $|V(P)|=|E(P)|+1$.
We sometimes view $P$ as a subgraph of $G$ with vertices $\{v_1,\ldots,v_k\}$ and edges (hops) $\{v_1v_2,\ldots,v_{k-1}v_k\}$.

For any $k\geq 0$,
$\dist_G^k(s,t)$ is the minimum length of an $s\to t$ path in $G$ with at most $k$ hops.
A~\emph{shortest $k$-hop-bounded} $s\to t$ path in $G$ is an $s\to t$ path with length $\dist_G^k(s,t)$ and at most~$k$ hops.
We define the $s,t$-distance $\dist_G(s,t)$ as $\inf_{k\geq 0}\dist_G^k(s,t)$.
For $s,t\in V$, we say that~$t$ is \emph{reachable} from $s$ in~$G$ if there exists an $s\to t$ path in $G$,
that is, $\dist_G(s,t)<\infty$.
If $\dist_G(s,t)$ is finite,
there exists a simple $s\to t$ path of length $\dist_G(s,t)$.
Then, we call any $s\to t$ path of length $\dist_G(s,t)$ a \emph{shortest $s,t$-path}.

If $G$ contains no negative cycles, then $\dist_G(s,t)=\dist^{n-1}_G(s,t)$ for all $s,t\in V$.
Moreover, in such a case there exists a \emph{feasible price function} $p:V\to\mathbb{R}$ such
that reduced weight $\wei_p(e)$ satisfies $\wei_p(e):=\wei(e)+p(u)-p(v)\geq 0$ for all $uv=e\in E$.
For any path $s\to t=P\subseteq G$, the reduced length $\len_p(P)$ (i.e., length wrt. weights $\wei_p$) is non-negative and differs
from the original length $\len(P)$ by the value $p(s)-p(t)$ which does not depend on the shape of~$P$.

For any $S\subseteq V$, we denote by $G-S$ the subgraph of $G$ on $V$ obtained from $G$ by removing all edges incident to vertices $S$.

We sometimes talk about rooted out-trees $T$ with all edges directed from a parent to a child.
In such a tree $T$ with root $s$, a \emph{root path} $T[s\to t]$ is the unique path from the root to the vertex $t$ of~$T$.
A subtree of $T$ rooted in some of its vertices $v$ is denoted by $T[v]$.

\section{Fully dynamic shortest paths data structure}
This section is devoted to proving the main theorem of this paper.
\tshpath*
First, let us assume that all the edge weights are non-negative.
Let us also make a simplifying assumption that any shortest $k$-hop-bounded $s\to t$ path
in $G$ always has a minimum possible number of hops and is simple. If there are no negative cycles,
this is easy to guarantee
by replacing each edge weight $\wei(e)$ in $G$ with a pair $(\wei(e),1)$, adding
weights coordinate-wise, and comparing them lexicographically.
We discuss how
to extend the data structure to also handle negative edge weights and negative cycles
later in Section~\ref{s:negative}.

We will first present a simple Monte Carlo randomized data structure,
and show how to make it deterministic with no asymptotic time penalty (wrt. $\Ot(\cdot)$ notation) in Section~\ref{s:deterministic}.

Some further variants of the data structure are sketched in the Appendix. A variant for unweighted
digraphs is given in Section~\ref{s:unweighted}.
In the weighted case, one can also achieve polynomially faster update at the cost of
polynomially slower query and randomization. For details, see~Section~\ref{s:tradeoff}.

The data structure operates in phases of $\Delta$ vertex updates, where $\Delta$ is to be fixed later.
At the beginning of each phase, we apply a rather costly preprocessing described
in the next subsection.

\subsection{Preprocessing at the beginning of a phase}\label{s:prep}

The preprocessing follows the general approach of~\cite{GutenbergW20b} adjusted with some ideas from~\cite{Karczmarz21}.

Let $h\in [2,n]$, and let $\tau$ be a \emph{congestion threshold}, to be set later.
We compute a certain collection of paths $\Pi$ in $G$ containing,
for every pair $s,t\in V$,
at most one $s\to t$ path $\pi_{s,t}$,
satisfying $|\pi_{s,t}|=O(h)$,
and a subset $C\subseteq V$ of \emph{congested vertices}. 

First of all, the collection~$\Pi$ and the set
$C$ satisfy:
\begin{equation}\label{eq:short}
  \dist_G^h(s,t)\leq \len(\pi_{s,t})\leq \dist_{G-C}^h(s,t), \text{ for all }s,t\in V.
\end{equation}
Above, we abuse the notation a bit and set $\len(\pi_{s,t}):=\infty$ if there is no path $\pi_{s,t}$ in $\Pi$.
Moreover, for any $v\in V$, let us define:
\begin{align*}
  \Pi(v)&:=\{\pi_{s,t}\in \Pi:v\in V(\pi_{s,t})\},\\
  \alpha(v)&:=\sum_{\pi_{s,t}\in \Pi(v)} \deg(t).
\end{align*}
Crucially, $\Pi$ additionally satisfies:
\begin{equation}\label{eq:conj}
  \alpha(v) \leq \tau, \text{ for all }v\in V.
\end{equation}

\begin{lemma}\label{l:prep}
  Let $h\in [1,n]$. For any $\tau\geq 2m$, in $O(nmh)$ time one can compute the congested set $C\subseteq V$ and a set of
  paths $\Pi$ satisfying
  conditions~\eqref{eq:short}~and~\eqref{eq:conj} so that $|C|=O(nmh/\tau)$.
\end{lemma}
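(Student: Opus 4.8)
The plan is to follow the iterative ``congestion-reduction'' scheme of Probst Gutenberg and Wulff-Nilsen~\cite{GutenbergW20b}, instrumented with the sparsity-aware (degree-weighted) machinery of Karczmarz~\cite{Karczmarz21}. First, for the current working set $C\subseteq V$, I would maintain, for every source $s\in V$, a shortest $\le h$-hop path tree $T_s$ of $G-C$ rooted at $s$, computed by $h$ rounds of Bellman--Ford--style relaxation from $s$; replacing each weight $\wei(e)$ by the pair $(\wei(e),1)$ (added coordinate-wise, compared lexicographically) makes every root path of $T_s$ simple, of minimum hop-length, and canonically determined by the relaxation order. This costs $O(mh)$ time per source, i.e.\ $O(nmh)$ for the whole batch of $n$ trees. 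Setting $\pi_{s,t}:=T_s[s\to t]$ whenever $t$ lies within $h$ hops of $s$ in $G-C$, and leaving $\pi_{s,t}$ undefined otherwise, makes $|\pi_{s,t}|\le h$ and yields~\eqref{eq:short} immediately: the left inequality because $\pi_{s,t}$ is a $\le h$-hop $s\to t$ path of $G$, the right one because $\pi_{s,t}$ realises $\dist_{G-C}^h(s,t)$ (or that quantity is $\infty$).

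Starting from $C=\emptyset$ and the trees above, I would then repeat the following step: while some vertex $v$ satisfies $\alpha(v)>\tau$, add $v$ to $C$ and repair the trees so that they are again shortest $\le h$-hop path trees of the new $G-C$. Only the detached subtrees $T_s[v]$ need rebuilding, and --- following the techniques of~\cite{Karczmarz21} --- this is performed by a localized Dijkstra-like relaxation that re-examines each affected vertex and its incident edges only a bounded number of times. When no congested vertex remains, I would output $C$ together with $\Pi=\{\pi_{s,t}\}$; condition~\eqref{eq:conj} then holds by the stopping rule, and~\eqref{eq:short} is preserved throughout since the trees are always exact in the current $G-C$.

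For the bound $|C|=O(nmh/\tau)$, the starting point is the accounting, valid at every moment of the procedure,
\[
  \sum_{v\in V}\alpha(v)=\sum_{\pi_{s,t}\in\Pi}|V(\pi_{s,t})|\cdot\deg(t)\ \le\ (h+1)\sum_{s\in V}\sum_{t\in V}\deg(t)\ =\ O(nmh),
\]
using $\sum_{t}\deg(t)=O(m)$. Combining this with a charging argument along the lines of~\cite{GutenbergW20b} (adapted to the $\deg(t)$-weights) --- in which every congested-vertex removal is paid for by more than $\tau$ units of $\alpha$-charge while the total such charge created over the whole run is $O(nmh)$ --- bounds the number of removals, and hence $|C|$, by $O(nmh/\tau)$ (which is $O(nh)$ for $\tau\ge 2m$).

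The remaining issue, which I expect to be the main obstacle, is the running time: the initial batch of $n$ trees already consumes the entire $O(nmh)$ budget, so one must show that \emph{all} the repairs together also cost only $O(nmh)$, regardless of the $\Theta(nmh/\tau)$ congested-vertex removals that may take place. This is exactly where the sparsity-aware ideas of~\cite{Karczmarz21} are needed: the work of re-expanding the subtrees $T_s[v]$ and re-relaxing their incident edges has to be amortized against the degree-weighted congestion freed by removing $v$, each unit of which may be charged only $O(1)$ times, and one also needs a bound on how many of the $n$ source-trees a single vertex deletion can disturb. A naive from-scratch recomputation after each removal would, by contrast, blow up the running time by a factor polynomial in $n$, so the delicate part is making the repairs amortize cleanly with no $\poly(n)$ (or $\poly(h)$) overhead.
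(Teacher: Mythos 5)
Your plan takes a genuinely different route from the paper's, and the gap you yourself flag at the end is real and, as far as I can see, not easily closeable. You propose to first build all $n$ shortest $\le h$-hop trees in $G$ (with $C=\emptyset$), then iteratively evict over-congested vertices and \emph{repair} the trees so that they are again exact shortest $\le h$-hop trees of the shrinking graph $G-C$, hoping that all repair work amortizes to $O(nmh)$. Nothing in~\cite{Karczmarz21} or~\cite{GutenbergW20b} gives you that for free: removing a single vertex $v$ from $G-C$ can invalidate the subtrees $T_s[v]$ for \emph{every} source $s$, and re-growing those subtrees via localized relaxation can repeatedly touch the same edges; the total repair cost over $\Theta(nmh/\tau)$ evictions is not obviously $O(nmh)$, and you give no charging argument that makes it so. Moreover, you are imposing a stronger invariant than the lemma actually needs --- that $\Pi$ is always a collection of \emph{exact} $\le h$-hop shortest-path trees in the current $G-C$ --- which is precisely what forces the repairs in the first place.

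The paper's proof sidesteps repairs entirely by a one-pass, staleness-tolerant construction. It processes sources $s$ one at a time; just before processing $s$, it moves to $C$ every vertex with $\alpha(v)>\tau/2$, then runs a single $O(mh)$ Bellman--Ford pass in the \emph{current} $G-C$ and never touches those paths again. Paths computed for earlier sources are allowed to go through vertices that are later added to $C$; this is harmless for~\eqref{eq:short} because enlarging $C$ only weakens the upper bound $\dist_{G-C}^h$, and a stale path is still a $\le h$-hop path in $G$, giving the lower bound. Condition~\eqref{eq:conj} is preserved because a single source can increase any $\alpha(v)$ by at most $\sum_t\deg(t)=m\le\tau/2$, and vertices already in $C$ receive no further charge. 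The total running time is then exactly $n$ Bellman--Ford passes, $O(nmh)$, with no repair phase. The cardinality bound $|C|=O(nmh/\tau)$ follows as in your accounting (each evicted vertex carries $>\tau/2$ of the $O(nmh)$ total $\alpha$-mass). So while your global $\sum_v\alpha(v)$ bound and $|C|$ bound are fine, the algorithmic core --- avoiding repairs by relaxing the invariant to allow stale paths --- is missing from your proposal, and it is the crucial idea that makes the $O(nmh)$ time bound go through.
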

\begin{proof}
  We start with empty sets $C$ and $\Pi$. Note that~\eqref{eq:conj} is satisfied initially since all values $\alpha(\cdot)$ are zero.
    We will gradually add new paths to $\Pi$ while maintaining~\eqref{eq:conj} and ensuring
    that~\eqref{eq:short} holds for more and more pairs $s,t$. While introducing
    new paths to $\Pi$, we will also maintain the values $\alpha(v)$ (as defined above) for all $v\in V$.

    We process source vertices $s\in V$ one by one, in arbitrary order. For each such $s$,
    we first move to $C$ all the vertices $v\in V\setminus C$ with $\alpha(v)> \tau/2$.
    Next, we compute, for all $t\in V$,
    a shortest $h$-hop-bounded path $\pi_{s,t}=s\to t$ in $G-C$ (if such a path exists).
    For a fixed $s$, all the paths $\pi_{s,t}$ can be computed in $O(mh)$ time using a variant of Bellman-Ford algorithm.
    We add the newly computed paths to $\Pi$.
    Note that afterwards,~\eqref{eq:short} holds for $s$ and all $t\in V$, even if the respective $\pi_{s,t}$ does
    not exist. Moreover,~\eqref{eq:short}
    also holds for all $\pi_{s',t'}\in\Pi$ that have been added for a source $s'$ processed earlier than $s$.
    Indeed, extending the set $C$ only weakens the upper bound in~\eqref{eq:short}.
    The values $\alpha(v)$ can be updated easily in $O(nh)$ time.
    Observe that for any $v\in V\setminus C$, $\alpha(v)$ grows by at most
    $\sum_{t\in V}\deg(t)=m$ when processing $s$. As a result, after processing~$s$, we have $\alpha(v)\leq \tau/2+m\leq \tau/2+\tau/2=\tau$
    and hence~\eqref{eq:conj} is satisfied.
    At the same time, since we use paths from $G-C$, for any $y\in C$, $\alpha(y)$ does not increase and thus we still have $\alpha(y)\leq \tau$.

    Finally, note that for any $\pi_{s,t}$ added to $\Pi$, since $|\pi_{s,t}|\leq h$, $\alpha(v)$
    grows by $\deg(t)$ for at most~$h$ distinct vertices $v$.
    As a result, we have $\sum_{v\in V}\alpha(v)\leq \sum_{t\in V}\deg(t)\cdot \left(\sum_{s\in V}h\right)\leq m\cdot nh$.
    But for each $y\in C$, we have $\alpha(y)>\tau/2$, so there is at most $2nmh/\tau$ such vertices $y$.
\end{proof}

Applying Lemma~\ref{l:prep} constitutes the only preprocessing that we perform at the beginning of a phase in the Monte Carlo randomized variant.
The computed paths $\Pi$ are stored explicitly and thus the used space might be $\Theta(n^2h)$.
Note that with the help of additional $O\left(\sum_{\pi_{s,t}\in\Pi}|\pi_{s,t}|\right)=O(n^2h)$ vertex-path pointers, we can
report the elements of any $\Pi(v)$, $v\in V$, in constant time per element.
We will discuss how to improve the space to $\Ot(n^2)$ using a trick due to Probst Gutenberg and Wulff-Nilsen~\cite{GutenbergW20b} in Section~\ref{s:space}.
\subsection{Update}
When a phase proceeds, let $D$ be the set of at most $\Delta$ \emph{affected} vertices in the current phase, that is,~$D$
contains every $v$ such that a vertex update around $v$ has been issued in this phase.

In the query procedure, we will separately consider paths going through $C\cup D$, and those
lying entirely in $G-(C\cup D)$. To handle the former, upon each update we simply compute single-source shortest-path
trees from and to each $s\in C\cup D$ in the current graph $G$. This takes $\Ot(|C\cup D|m)$ worst-case time
using Dijkstra's algorithm.

As a matter of fact, we will not quite compute shortest paths in $G-(C\cup D)$, but instead, we will find paths in $G-D$
that are not longer than the distances between their corresponding endpoints in $G-(C\cup D)$.
This is acceptable since $G-D\subseteq G$.

To prepare for queries about the paths in $G-(C\cup D)$, we do the following.
We will separately handle \emph{short} $\leq h$-hop shortest paths, and \emph{long} $>h$-hop shortest paths.

\paragraph{Short paths.} Denote by $G_0$ the graph at the beginning of the phase. Recall that we use $G$ to refer
to the current graph. Clearly, we have $G-D\subseteq G_0$.
Fix some $s\in V$.
First of all, note that if for some $t\in V$, $V(\pi_{s,t})\cap D=\emptyset$, then
$\pi_{s,t}\subseteq G-D$, so by~\eqref{eq:short}:
\begin{equation*}
  \dist_{G-D}(s,t)\leq \len(\pi_{s,t})\leq \dist^h_{G_0-C}(s,t)\leq \dist^h_{G-(C\cup D)}(s,t).
\end{equation*}
The paths $\pi_{s,t}$ going through $D$ are not preserved in $G-(C\cup D)$ and thus we cannot use them.
We replace them with other paths $\pi'_{s,t}$ constructed using the following lemma.
\begin{lemma}\label{l:dijkstra-rebuild}
  For $s\in V$, let $Q_s$ contain all $t$ such that $V(\pi_{s,t})\cap D\neq\emptyset$.
  In $\Ot\left(\sum_{t\in Q_s}\deg(t)\right)$ time we can compute a representation
  of paths $\pi'_{s,t}\subseteq G-D$ (where $t\in Q_s$), each with possibly $\Theta(n)$ hops, satisfying:
\begin{equation*}
  \dist_{G-D}(s,t)\leq \len(\pi'_{s,t}) \leq \dist^h_{G-(C\cup D)}(s,t).
\end{equation*}
  The representation is a tree $T_s$ rooted at $s$ such that:
  \begin{enumerate}[(1)]
    \item some edges $sv\in E(T_s)$ represent paths $\pi_{s,v}\subseteq G-D$ from $\Pi$ and have corresponding weights $\len(\pi_{s,v})$,
    \item all other edges of $T_s$ come from $E(G-D)$,
    \item for all $t\in Q_s$, $\pi'_{s,t}$ equals $T_s[s\to t]$ with possibly the first edge $sw$ of that path uncompressed into the corresponding path $\pi_{s,w}\in\Pi$.
  \end{enumerate}
  \end{lemma}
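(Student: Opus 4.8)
The plan is to run a single-source shortest-path computation from $s$ in an auxiliary graph $H_s$ that augments $G-D$ with "shortcut" edges encoding the still-valid precomputed paths from $\Pi$. Concretely, I would form $H_s$ on vertex set $V$ by taking all edges of $G-D$ together with, for each $v$ such that $\pi_{s,v}\in\Pi$ and $V(\pi_{s,v})\cap D=\emptyset$, an extra edge $sv$ of weight $\len(\pi_{s,v})$. Since edge weights are assumed non-negative here, I would run Dijkstra's algorithm from $s$ in $H_s$; the shortest-path tree it produces is exactly the desired tree $T_s$, and for $t\in Q_s$ I set $\pi'_{s,t}:=T_s[s\to t]$ with the first edge uncompressed into $\pi_{s,w}$ if it happens to be a shortcut edge $sw$. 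Properties (1)--(3) are then immediate from the construction: the only shortcut edges incident to anything are the $sv$ edges, so the first edge of a root path is the only one that can be a shortcut, and all other tree edges lie in $E(G-D)$.

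For correctness of the length bounds, the lower bound $\dist_{G-D}(s,t)\le \len(\pi'_{s,t})$ holds because every shortcut edge $sv$ of weight $\len(\pi_{s,v})$ corresponds to an actual path $\pi_{s,v}\subseteq G-D$ of that length (by~\eqref{eq:short}, $\pi_{s,v}$ has length $\ge \dist_G^h(s,v)\ge\dist_{G-D}(s,v)$, and being a genuine $G-D$ path its length is $\ge\dist_{G-D}(s,v)$), so any $s\to t$ walk in $H_s$ maps to an $s\to t$ walk in $G-D$ of the same length. For the upper bound, fix $t\in Q_s$ and let $R=s\to t$ be a shortest $\le h$-hop path in $G-(C\cup D)$, of length $\dist^h_{G-(C\cup D)}(s,t)$. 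Let $w$ be the last vertex on $R$ that lies on some $\pi_{s,w}\in\Pi$ with $V(\pi_{s,w})\cap D=\emptyset$ — note $w=s$ always qualifies, so this is well defined. The portion of $R$ from $w$ to $t$ lies in $G-D$ (it uses original edges of $G-(C\cup D)\subseteq G-D$), so in $H_s$ the walk "shortcut edge $sw$, then $R[w\to t]$" has length $\len(\pi_{s,w})+\len(R[w\to t])$. By~\eqref{eq:short} and because $R$ is a path in $G-(C\cup D)$ with at most $h$ hops (so its prefix to $w$ has at most $h$ hops), $\len(\pi_{s,w})\le \dist^h_{G-C}(s,w)\le \len(R[s\to w])$; adding $\len(R[w\to t])$ gives $\le\len(R)=\dist^h_{G-(C\cup D)}(s,t)$. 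Hence Dijkstra's distance to $t$ in $H_s$, which is $\len(\pi'_{s,t})$, is at most this, as claimed.

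For the running time, the key point is that $H_s$ need not be built explicitly on all of $V$: a vertex $t$ can only be a Dijkstra-relevant endpoint if it is reached, and I would restrict attention to the set $Q_s$ of endpoints we actually care about, noting that outside the prefixes of valid $\pi_{s,v}$'s the graph $H_s$ is just $G-D$. The clean way is to observe that we only need $T_s$ restricted to the vertices in $Q_s$ and their Dijkstra ancestors; running a priority-queue Dijkstra whose relaxations touch $\deg_G(t)$ edges whenever $t$ is settled, and whose shortcut edges are inserted lazily, costs $\Ot(\sum_{t\in Q_s}\deg(t))$ — each settled vertex $t$ contributes its degree, each shortcut edge $sv$ is charged to $v$, and the heap operations add only a logarithmic factor. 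I expect the main obstacle to be exactly this running-time accounting: one must argue that Dijkstra can be confined to a subgraph of total size $\Ot(\sum_{t\in Q_s}\deg(t))$ rather than all of $G-D$, which presumably uses that the paths we need are precisely those whose endpoints are in $Q_s$ and that every useful shortcut target, being an initial vertex of some destroyed path $\pi_{s,t}$ with $t\in Q_s$, is itself accounted for — possibly together with the simplifying "minimum-hop, simple" assumption and the structure of $\Pi$. Everything else is routine verification of (1)--(3) and the two inequalities.
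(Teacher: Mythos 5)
Your correctness argument for the two length inequalities is essentially the paper's argument: you pick the last vertex $w$ on the $\le h$-hop path $R$ in $G-(C\cup D)$ with $V(\pi_{s,w})\cap D=\emptyset$ and splice $\pi_{s,w}$ with the tail of $R$. That part is fine.

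The gap is in the construction of the auxiliary graph and hence in the running time. You build $H_s$ as \emph{all} of $G-D$ plus shortcut edges $sv$ for every $v$ with a surviving $\pi_{s,v}$, then run Dijkstra from $s$. A Dijkstra run on that graph settles arbitrary vertices of $G-D$ and relaxes along their full adjacency lists, so its cost is governed by $\widetilde{O}(|E(G-D)|)=\widetilde{O}(m)$, not by $\widetilde{O}(\sum_{t\in Q_s}\deg(t))$. Your suggested fix --- ``confine Dijkstra to $Q_s$ and its Dijkstra ancestors'' with ``lazy'' shortcut insertion --- does not bound this: the Dijkstra ancestors of $Q_s$ in $H_s$ can be numerous and of high out-degree, and you also cannot identify them before running Dijkstra. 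Moreover, the number of shortcut edges $sv$ in your $H_s$ can be $\Theta(n)$, which already exceeds the target budget independently of $G-D$.

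The paper's construction is pruned in exactly the two places your $H_s$ is not. The auxiliary graph $Y$ is edge-induced: it contains \emph{only} the edges $vt\in E(G-D)$ with $t\in Q_s$, and, for each such edge, a shortcut $sv$ only if $V(\pi_{s,v})\cap D=\emptyset$. Thus $|E(Y)|=O(\sum_{t\in Q_s}\deg(t))$ and Dijkstra on $Y$ directly meets the time bound. Crucially, the pruning does not hurt correctness: once you pick the last vertex $p\notin Q_s$ on the witness path $P$, the maximality of $p$ forces every subsequent edge of $P$ to have its head in $Q_s$, so those edges survive in $Y$, and so does the shortcut $sp$ (because $p$ is the tail of an edge into $Q_s$). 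This is the observation your proof is missing: you need the fact that the \emph{tail} of the splice point $p$ feeds into $Q_s$ to justify keeping $sp$, and the fact that the entire suffix of $P$ has heads in $Q_s$ to justify discarding all other edges of $G-D$. Without this, the $\widetilde{O}(\sum_{t\in Q_s}\deg(t))$ bound is not obtained.
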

\begin{proof}
  Let $Y$ be an edge-induced directed graph obtained as follows. For all $t\in Q_s$, and every of
  at most $\deg(t)$ edges $vt\in E(G-D)$, we add to $Y$ the following:
  \begin{itemize}
    \item the edge $vt$ itself (with the same weight),
    \item if $V(\pi_{s,v})\cap D= \emptyset$, an edge $sv$ of weight $\len(\pi_{s,v})$
      corresponding to the path $\pi_{s,v}\in \Pi$.
  \end{itemize}
  
  The algorithm is to simply compute a shortest paths tree $T_s$ from $s$ in $Y$ in \linebreak
  $\Ot(|E(Y)|)=\Ot\left(\sum_{t\in Q_s}\deg(t)\right)$ time using Dijkstra's algorithm. Clearly,
  any path $T_s[s\to t]$ corresponds to an $s\to t$ path in $G-D$.
  It is thus sufficient
  to prove that for all $t\in Q_s$, we have $\len(T_s[s\to t])\leq \dist^h_{G-(C\cup D)}(s,t)$.

  If $t$ is unreachable in $G-(C\cup D)$ from $s$ using a path
  with at most $h$ hops, there is nothing to prove.
  Otherwise,
  let a simple path $P$ be a shortest $h$-hop-bounded $s\to t$ path in $G-(C\cup D)$. Let $p$ be the last vertex on $P$ such that $V(\pi_{s,p})\cap D=\emptyset$, that is, $p\notin Q_s$.
  Note that $p$ exists since $\dist_{G-C}^h(s,v)\neq\infty$ for all $v\in V(P)$ (which implies $\pi_{s,v}\in\Pi$) and $p\neq t$.
  Let~$P'$ be the $s\to p$ subpath of $P$.
  Let $e_1,\ldots,e_k\in E(G-(C\cup D))$ be the edges following~$p$ on~$P$.
  Here, $p$ is the tail of $e_1$.
  By the definition of $Y$ and $p$, we have $e_i\in E(Y)$ for all $i=1,\ldots,k$
  since the head of each $e_i$ is in $Q_s$. 
  Moreover, there is an edge $sp$ of weight $\len(\pi_{s,p})$ in $Y$.
  Now, since $\pi_{s,p}$ is a path
  in $G-D$ of length at most $\dist_{G_0-C}^h(s,p)$, whereas the path $P'\subseteq G-(C\cup D)=G_0-(C\cup D)$ has less than $h$ hops, we obtain
  $\len(\pi_{s,p})\leq \len(P')$ and hence:
  \begin{equation*}
    \len(T_s[s\to t])=\dist_Y(s,t)\leq \len(\pi_{s,p})+\sum_{i=1}^{k}\wei(e_i)\leq \len(P')+\sum_{i=1}^k w(e_i)=\len(P)=\dist^h_{G-(C\cup D)}(s,t).\qedhere
  \end{equation*}
\end{proof}

We compute the paths $\pi'_{s,t}$ from Lemma~\ref{l:dijkstra-rebuild}
for all $s\in V$, $t\in Q_s$. Recall that $t\in Q_s$ implies that $V(\pi_{s,t})\cap D\neq \emptyset$ and thus $\pi_{s,t}\in\Pi(d)$ for some $d\in D$.
Therefore, the time needed for computing the paths $\pi'_{s,t}$ can be bounded as follows:
\begin{equation*}
  \Ot\left(\sum_{s\in V}\sum_{t\in Q_s}\deg(t)\right)
  = \Ot\left(\sum_{d\in D}\sum_{\pi_{s,t}\in \Pi(d)}\deg(t)\right)=\Ot\left(\sum_{d\in D}\alpha(d)\right)=\Ot(|D|\tau)=\Ot(\Delta\tau).
\end{equation*}
Note that the sets $Q_s$ can also be constructed within this bound: they can be read
from $\bigcup_{d\in D}\Pi(d)$ which also has size $\Ot(\Delta\tau)$ and the paths from any $\Pi(v)$ can be reported in $O(1)$ time per path.

For all $s\in V$ and $t\notin Q_s$, let us simply set $\pi'_{s,t}:=\pi_{s,t}$. Let us also put $\Pi'=\{\pi_{s,t}':s,t\in V\}$.
To summarize, in $\Ot(\Delta\tau)$ time we can find, for all $s,t\in V$,
a representation of paths $\pi_{s,t}'$ in $G-D$ that are at least as short as the corresponding
shortest $h$-hop-bounded $s\to t$ paths in $G-(C\cup D)$. Storing a representation of the paths $\Pi'\setminus \Pi$
requires $\Ot(\min(\Delta\tau,n^2))$ additional space since, by the construction of Lemma~\ref{l:dijkstra-rebuild},
each of these paths can be encoded using its last edge and a pointer to another path in $\Pi'$
with less hops.

\paragraph{Long paths.}
In order to handle long paths, we use the following standard hitting set trick from~\cite{UY91}.
\begin{lemma}\label{l:hitting-rand}
  Let $G=(V,E)$ be a directed graph with no negative cycles. For any $s,t\in V$, fix some simple shortest $s\to t$ path $p_{s,t}$ in $G$.

  Let $H\subseteq V$ be obtained by sampling, uniformly and independently (also from the choice of paths $p_{s,t}$), $c\cdot (n/h)\log{n}$ elements
  of $V$, where $c\geq 1$ is a constant. Then, with high probability\footnote{That is, with probability at least $1-1/n^\alpha$, where
    the constant $\alpha\geq 1$ can be set arbitrarily. We will also use the standard abbreviation w.h.p.}
 (controlled by the constant $c$), for all $s,t\in V$,
  if $|p_{s,t}|\geq h$, then $V(p_{s,t})\cap H\neq\emptyset$.
\end{lemma}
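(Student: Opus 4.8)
The plan is to carry out the classical union-bound argument due to Ullman and Yannakakis~\cite{UY91}. First I would fix an arbitrary pair $s,t\in V$ with $|p_{s,t}|\geq h$. Since $p_{s,t}$ is a \emph{simple} path, having at least $h$ hops forces it to visit at least $h+1$ distinct vertices, so $|V(p_{s,t})|\geq h$. Hence a single uniformly random vertex of $V$ lands outside $V(p_{s,t})$ with probability at most $1-h/n$.

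Next, because the $c\cdot(n/h)\log n$ sampled vertices are chosen independently of one another (and of the fixed collection of paths $p_{s,t}$), the probability that \emph{every} sampled vertex avoids $V(p_{s,t})$ is at most $(1-h/n)^{c(n/h)\log n}$. Using $1-x\leq e^{-x}$, this is at most $\bigl((1-h/n)^{n/h}\bigr)^{c\log n}\leq e^{-c\log n}=n^{-c}$, which bounds the failure probability for this one pair.

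Finally I would apply a union bound over the at most $n^2$ pairs $s,t$ with $|p_{s,t}|\geq h$: the probability that $H$ fails to hit at least one such path is at most $n^2\cdot n^{-c}=n^{2-c}$. Taking the constant $c$ large enough (e.g.\ $c\geq 2+\alpha$ for the desired exponent $\alpha$ in the ``w.h.p.'' guarantee) makes this at most $n^{-\alpha}$, as claimed; this is the sense in which the success probability is ``controlled by $c$''. The stipulation that the sampling is independent of the choice of the $p_{s,t}$ is precisely what keeps this union bound valid even if those shortest paths were fixed by some other (possibly adversarial) procedure, provided it does not observe $H$.

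There is essentially no obstacle here; the only point requiring care is the inequality $|V(p_{s,t})|\geq h$, which genuinely uses simplicity of $p_{s,t}$ (a non-simple walk of hop-length $\geq h$ need not contain many distinct vertices to hit), and the no-negative-cycle hypothesis is what guarantees such a simple shortest path exists to be fixed in the first place.
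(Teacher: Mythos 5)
Your proof is correct and is exactly the standard Ullman--Yannakakis hitting-set argument that the paper invokes by citation rather than re-proving; nothing further is needed. The one subtlety you correctly isolate -- that simplicity is what upgrades hop-count $\geq h$ to $|V(p_{s,t})|\geq h$, and that the no-negative-cycle hypothesis is what licenses fixing such simple shortest paths -- is indeed the only point where the hypotheses do any work.
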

On update, we simply apply Lemma~\ref{l:hitting-rand} to the graph $G-(C\cup D)$ and an arbitrary choice of pairwise
shortest paths therein. This way, with high probability,
we obtain an $\Ot(n/h)$-sized hitting set $H$ of shortest paths in $G-(C\cup D)$ that have at least $h$ hops.
Finally, we simply compute shortest paths trees from and to the vertices $H$ in $G-(C\cup D)$
in $\Ot(|H|m)=\Ot(mn/h)$ worst-case time using Dijkstra's algorithm.

\subsection{Query}
To answer a query about $s,t$ distance in the current graph, we simply return:
\begin{equation}\label{eq:query}
  \min\left(\min_{v\in C\cup D}\left\{\dist_G(s,v)+\dist_G(v,t)\right\},\min_{v\in H}\left\{\dist_{G-(C\cup D)}(s,v)+\dist_{G-(C\cup D)}(v,t)\right\},\len(\pi'_{s,t})\right).
\end{equation}
The first term above is responsible for considering all $s,t$ paths in $G$ going through $C\cup D$.
If no shortest $s,t$ paths in $G$ pass through $C\cup D$, then the second term
captures (with high probability) one of such paths provided that it has at least~$h$ hops.
Finally, if every shortest $s,t$ path in $G$ does not go through $C\cup D$ and has
less than $h$ hops, then $\dist_{G}(s,t)=\dist_{G-D}(s,t)=\dist^h_{G-(C\cup D)}(s,t)$.
Moreover, by Lemma~\ref{l:dijkstra-rebuild}, a path
$\pi'_{s,t}$ is contained in $G-D\subseteq G$ and we have
\begin{equation*}
  \dist_G(s,t)=\dist_{G-D}(s,t)\leq \len(\pi'_{s,t})\leq \dist_{G-(C\cup D)}^h(s,t)=\dist_G(s,t),
\end{equation*}
so indeed $\dist_G(s,t)=\len(\pi'_{s,t})$.

Finally, note that finding the minimizer in~\eqref{eq:query} allows for reconstruction of
some shortest $s,t$ path $P$ in $G$ in $O(|P|)$ time using the stored data structures.

\subsection{Time analysis}
The total time spent handling a single update is:
\begin{equation*}
  \Ot\left((|D|+|C|+|H|)m+\Delta\tau\right)=\Ot(m\Delta+nm^2h/\tau+mn/h+\Delta\tau).
\end{equation*}
There is also an $O(mnh)$ preprocessing cost spent every $\Delta$ updates which yields
an amortized cost of $\Ot(mnh/\Delta)$ per update.
Since $\tau\geq 2m$, the term $m\Delta$ is negligible above.

Balancing the terms $mnh/\Delta$ and $mn/h$ yields $\Delta=h^2$.
Next, balancing with $\Delta\tau$ yields ${\tau=mn/h^3}$ under the assumption $h=O(n^{1/3})$.
Finally, balancing $mn/h$ and $nm^2h/\tau=mh^4$ yields $h=n^{1/5}$, $\Delta=n^{2/5}$, and $\tau=mn^{2/5}$.
For such a choice of parameters, the amortized update time is $\Ot(mn^{4/5})$.
Since the only source of amortization here is a costly preprocessing step happening
in a coordinated way every $\Delta$ updates, the bounds can be made worst-case
using a standard technique, see, e.g.,~\cite{AbrahamCK17, BrandNS19}.

The query time is $O(|C|+|D|+|H|+1)$. For the obtained parameters, the bound becomes \linebreak
$\Ot(\Delta+nmh/\tau+n/h)=\Ot(h^4+n/h)=\Ot(n^{4/5})$.

\begin{remark}
  In the above analysis, we have silently assumed that the ``current'' number of edges $m$ does not decrease significantly (say, by more than a constant factor)
  during a phase due to vertex deletions, so that $m=\Omega(m_0)$ holds at all times, where $m_0=|E(G_0)|$. 
  Since the preprocessing of Lemma~\ref{l:prep} is applied to $G_0$, for the chosen parameters $h,\Delta$, and $\tau=m_0n^{2/5}$,
  the update bound should more precisely be bounded by $\Ot(\max(m,m_0)\cdot n^{4/5})$.
  In general, it might happen that $m$ becomes polynomially smaller that $m_0$ while the phase proceeds, e.g., if $m_0=O(n\Delta)$.
  This could make the update bound higher than $\Ot(mn^{4/5})$.

  There is a simple fix to this shortcoming, described in~{\upshape\cite{Karczmarz21}}:
  when a phase starts, it is enough to put aside a set $B\subseteq V$ of $\Delta$ vertices
  with largest degrees in $G_0$ and preprocess the graph $G_0-B$ instead. 
  The edges incident to vertices $B$ can be viewed as added during the first $\Delta$
  ``auxiliary'' updates in the phase, and effectively included in the affected set $D$ from the beginning
  of the phase. One can easily prove that
  this guarantees that $m=\Omega(m_0)$ throughout the phase, where $m_0$ is now defined as $|E(G_0-B)|$.
\end{remark}

\subsection{Derandomization}\label{s:deterministic}
The only source of randomization so far was sampling a subset of vertices that hits
shortest paths in $G-(C\cup D)$ with at least $h$ hops.
To derandomize the data structure, we will construct a hitting set $H$ of size $\Ot(n/h)$
such that $H$ hits all the paths in $\Pi'=\{\pi'_{s,t}:s,t\in V\}$ (constructed during update) with at least~$h$ distinct vertices.
Recall that the paths $\Pi'$ have been used to handle short paths so far.
We first show that a hitting set $H$ defined this way can serve the same purpose as the randomly sampled hitting set.

\begin{lemma}\label{l:hitting-correct}
  Let $H\subseteq V$ be such that for all $s,t\in V$ satisfying  $|V(\pi'_{s,t})|\geq h$,
  $V(\pi'_{s,t})\cap H\neq \emptyset$ holds. Let $a,b\in V$ be such that every shortest $a\to b$ path
  in $G$ has more than $h$ hops and \emph{does not} go through $C\cup D$. Then there exists a shortest $a\to b$ path in $G$ that goes through a vertex of~$H$.
\end{lemma}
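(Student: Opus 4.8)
The plan is to take a shortest $a\to b$ path $P$ in $G$ witnessing the hypotheses (so $|P|>h$ and $P$ avoids $C\cup D$), and argue that $P$ itself — or a shortest path obtained by locally rerouting a prefix of $P$ — must pass through $H$. Since $P$ avoids $C\cup D$, in particular it avoids $D$, so $P\subseteq G-D\subseteq G_0$. I would first observe that every vertex $v$ on $P$ is reachable from $a$ within $h$ hops in $G-(C\cup D)$ by a prefix of $P$ — wait, that need not hold since $|P|>h$; instead, I'll walk along $P$ from $a$ and consider the first $h+1$ vertices $a=u_0,u_1,\dots,u_h$ on $P$. For each such $u_i$, the prefix of $P$ up to $u_i$ is a path in $G-(C\cup D)$ with at most $h$ hops, so $\dist^h_{G-(C\cup D)}(a,u_i)<\infty$, which by condition~\eqref{eq:short} (applied in $G_0$, with $G_0-C=G-(C\cup D)$ since $P$ avoids $D$) guarantees that $\pi_{a,u_i}\in\Pi$ exists; and after the update, $\pi'_{a,u_i}\in\Pi'$ exists as well.

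The key step is then to consider $\pi'_{a,u_h}$, the path in $\Pi'$ from $a$ to the $h$-th vertex $u_h$ of $P$. By Lemma~\ref{l:dijkstra-rebuild} (in the case $u_h\in Q_a$) or trivially (if $u_h\notin Q_a$, in which case $\pi'_{a,u_h}=\pi_{a,u_h}$), we have $\len(\pi'_{a,u_h})\le \dist^h_{G-(C\cup D)}(a,u_h)\le \len(P[a\to u_h])$, and $\pi'_{a,u_h}\subseteq G-D\subseteq G$. Now I form the path $P' := \pi'_{a,u_h}$ followed by the suffix $P[u_h\to b]$ of $P$. This is an $a\to b$ walk in $G$ of length at most $\len(P)=\dist_G(a,b)$, hence a shortest $a\to b$ path in $G$ (extracting a simple path if necessary does not increase the length). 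I then do a case split on $|V(\pi'_{a,u_h})|$. If $|V(\pi'_{a,u_h})|\ge h$, then by the hypothesis on $H$ the path $\pi'_{a,u_h}$ meets $H$, so $P'$ meets $H$ and we are done. If $|V(\pi'_{a,u_h})|<h$, then $P'$ has hop-length at most $(h-1)+|P[u_h\to b]| = (h-1)+(|P|-h) = |P|-1 < |P|$; but $P'$ is still a shortest $a\to b$ path in $G$, contradicting the assumption in the lemma that \emph{every} shortest $a\to b$ path in $G$ has more than $h$ hops — unless $P'$ passes through $C\cup D$. Since $\pi'_{a,u_h}\subseteq G-D$ avoids $D$ but could in principle touch $C$, and $P[u_h\to b]$ avoids $C\cup D$ by assumption, the only escape is $V(\pi'_{a,u_h})\cap C\neq\emptyset$.

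The main obstacle I anticipate is exactly this last point: ruling out (or absorbing) the possibility that the rerouted prefix $\pi'_{a,u_h}$ sneaks through a congested vertex of $C$, which would make $P'$ go through $C\cup D$ and thus not contradict the hypothesis. I expect this to be handled by iterating the rerouting argument: replace $P$ by $P'$ and repeat, using that each step either finds an $H$-vertex or strictly decreases the hop-length while keeping a shortest path in $G$; since hop-length is a nonnegative integer this terminates, and the only stable outcomes are ``meets $H$'' or ``meets $C\cup D$,'' the latter excluded by hypothesis. Alternatively — and more cleanly — I would avoid the issue by choosing $u_h$ more carefully: since $P$ itself avoids $C$ and has $>h$ hops, I can instead apply the short-path substitution along $P$ in a way that keeps the substituted subpath inside $G-(C\cup D)$, using the fact that $\pi_{s,t}$ is only required to be no longer than $\dist^h_{G-(C\cup D)}(s,t)$, not necessarily contained in $G-C$; here the cleanest route is to recall that $\dist_G(a,b)=\dist^h_{G-(C\cup D)}$-style telescoping from~\eqref{eq:short} lets us bound $\dist_G(a,b)$ by a sum of lengths of paths in $\Pi'$, and at least one of these $\Pi'$-paths must be ``long'' (have $\ge h$ vertices) because $|P|>h$ forces the decomposition to contain a long piece — and that long piece meets $H$ by hypothesis. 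I would write the final proof around whichever of these two formulations the author's subsequent development supports, but the telescoping-decomposition version is the one I would attempt first.
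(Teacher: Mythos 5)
Your core step --- replacing the $h$-hop prefix of a shortest $a\to b$ path by $\pi'_{a,\cdot}$ and case-splitting on $|V(\pi'_{a,\cdot})|$ --- is the same idea as the paper's, but your write-up has two real gaps.

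First, the bound ``$P'$ has hop-length at most $(h-1)+|P[u_h\to b]|$'' silently assumes $\pi'_{a,u_h}$ is simple, so that $|V(\pi'_{a,u_h})|<h$ would give $|\pi'_{a,u_h}|\le h-1$. But the paths $\pi'$ produced by Lemma~\ref{l:dijkstra-rebuild} are assembled from Dijkstra edges and compressed $\Pi$-paths and need \emph{not} be simple (they can contain zero-weight cycles), so a small vertex set does not bound the hop count. The paper handles this explicitly by extracting a \emph{simple} subpath $\pi''_{a,c}$ with $V(\pi''_{a,c})\subseteq V(\pi'_{a,c})$ and $\len(\pi''_{a,c})=\len(\pi'_{a,c})$, and only then comparing hop counts. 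You do mention extracting a simple path, but you apply it to $P'$ after the fact; the extraction must be applied to the replacement prefix in order to get the quantitative hop bound you rely on.

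Second, you claim $|P'|<|P|$ contradicts ``every shortest $a\to b$ path in $G$ has more than $h$ hops.'' It does not: $|P|-1$ can still exceed $h$. You then diagnose the ``main obstacle'' as $\pi'_{a,u_h}$ possibly passing through $C$; but since $P'$ is a shortest $a\to b$ path, the lemma's premise already rules out $V(P')\cap C\neq\emptyset$ --- the genuine obstacle is simply that a one-step hop decrease is not yet a contradiction. Your proposed iteration does repair this (the hop count of the current shortest path strictly decreases, each iterate is still shortest, so termination forces the $|V(\pi')|\ge h$ branch), but this is a roundabout version of what the paper accomplishes in one step: choose the shortest $a\to b$ path $Q$ \emph{with the fewest hops}. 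Then $|\pi''_{a,c}|<h=|R|$ would immediately yield a shortest $a\to b$ path with fewer hops than $Q$, contradicting minimality. That normalization turns your induction into a one-shot contradiction and is the missing ingredient here. Your ``telescoping decomposition'' alternative, as sketched, does not obviously work: each short piece $\pi'_{u_i,u_{i+h}}$ may individually have fewer than $h$ vertices, so nothing forces one of the pieces to be ``long.''
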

\begin{proof}
  Let $Q$ be the shortest $a\to b$ path in $G$ that has the minimum number of hops. By the assumption, $|Q|>h$
  and $V(Q)\cap (C\cup D)=\emptyset$.
  Let $Q=RS$, where $R=a\to c$ is the $h$-hop prefix of $Q$.
  We have $R\subseteq G-(C\cup D)$ and, since $Q$ is a shortest path in $G$,
  $R$ is also shortest in $G$ and
  \begin{equation*}
  \len(R)=\dist_{G}(a,c)=\dist_{G}^h(a,c)=\dist_{G-(C\cup D)}^h(a,c).
  \end{equation*}
  Since $\dist_{G-(C\cup D)}^h(a,c)$ is finite, the path $\pi'_{a,c}\subseteq G-D\subseteq G$ satisfies
  \begin{equation*}
    \dist_G(a,c)\leq \dist_{G-D}(a,c)\leq \len(\pi'_{a,c})\leq \dist_{G-(C\cup D)}^h(a,c)=\dist_G(a,c).
  \end{equation*}
  We conclude that the path $Q'=\pi'_{a,c}\cdot S$ satisfies $\len(Q')=\len(Q)$ and thus $Q'$ is also a shortest $a\to b$ path in $G$.
  Since $G$ has no negative cycles, one can obtain a \emph{simple} $a\to c$ path $\pi''_{a,c}$ from $\pi'_{a,c}$ by eliminating zero-weight cycles,
  so that $\len(\pi''_{a,c})=\len(\pi'_{a,c})=\dist_G(a,c)$ and $V(\pi''_{a,c})\subseteq V(\pi'_{a,c})$.
  By the definition of $Q$, $|V(\pi'_{a,c})|\geq |\pi''_{a,c}|\geq |R|\geq h$, since otherwise $Q$ would not have a minimum number of hops.
  As a result, from the assumption we conclude $V(\pi'_{a,c})\cap H\neq \emptyset$, so~$Q'$ is a shortest $a\to b$ path in $G$ going through a vertex of $H$.
\end{proof}

\paragraph{Additional preprocessing.} When a phase starts, we additionally do the following.
Let $\Pi_0$ be a set of paths obtained as follows. For all $\pi_{s,t}\in \Pi$, if $|\pi_{s,t}|\geq h/2$, we add $\pi_{s,t}$ to $\Pi_0$.

Let us now recall a folklore greedy algorithm (used, e.g.,~in~\cite{Zwick02}) for computing a hitting set of a collection of
sufficiently large sets over a common ground set, summarized by the following lemma.

\begin{lemma}\label{l:hitting}
  Let $X$ be a ground set of size $n$ and let $\mathcal{Y}$ be a family of subsets of $X$, each with at least $k$ elements.
  Then, in $O\left(\sum_{Y\in\mathcal{Y}}|Y|\right)$ time one can deterministically compute a \emph{hitting set} $H\subseteq X$
  of size $O((n/k) \cdot\log{n})$ such that $H\cap Y\neq\emptyset$ for all $Y\in \mathcal{Y}$.
\end{lemma}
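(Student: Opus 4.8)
The plan is to analyze the standard greedy construction. Maintain the subfamily $\mathcal{Y}'\subseteq\mathcal{Y}$ of sets not yet hit, initially $\mathcal{Y}'=\mathcal{Y}$, and repeat the following until $\mathcal{Y}'=\emptyset$: pick an element $x\in X$ contained in the largest number of sets of $\mathcal{Y}'$, add $x$ to $H$, and delete from $\mathcal{Y}'$ every set containing $x$. First I would bound $|H|$. At any moment every surviving set has at least $k$ elements, so the number of incidences $\sum_{Y\in\mathcal{Y}'}|Y|$ is at least $k\,|\mathcal{Y}'|$; averaging over the $n$ elements of $X$, the element picked in the current round hits at least $(k/n)\,|\mathcal{Y}'|$ sets. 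Hence each round multiplies $|\mathcal{Y}'|$ by a factor of at most $1-k/n$, so after $t$ rounds at most $|\mathcal{Y}|\cdot(1-k/n)^t\le|\mathcal{Y}|\cdot e^{-tk/n}$ sets remain; this is below $1$ once $t>(n/k)\ln|\mathcal{Y}|$, and then $H$ hits all of $\mathcal{Y}$. Thus $|H|=O\big((n/k)\log|\mathcal{Y}|\big)$, which is $O\big((n/k)\log n\big)$ whenever $|\mathcal{Y}|=n^{O(1)}$, as is the case in all our applications.

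Next I would give an implementation running in $O(N)$ time, where $N=\sum_{Y\in\mathcal{Y}}|Y|$; note $|\mathcal{Y}|\le N$ since every set is nonempty. I would store the bipartite incidence structure explicitly: for each surviving set a doubly linked list of its elements, and for each element $x$ a doubly linked list of the surviving sets containing it, whose length is recorded in a counter $c(x)$. To supply the current maximizer in amortized $O(1)$ time, I would keep an array of buckets indexed by counter value (of length $|\mathcal{Y}|+1$), bucket $i$ being a list of the elements with $c(x)=i$, together with a pointer to the highest nonempty bucket; each element stores a pointer to its own node inside its bucket, so that moving it between adjacent buckets costs $O(1)$. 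When $x^\ast$ is selected, I scan its list of surviving sets, and for each such $Y$ remove $Y$ from $\mathcal{Y}'$ while decrementing $c(y)$ and moving $y$ one bucket down for every $y\in Y$. Each incidence $(y,Y)$ is processed $O(1)$ times over the whole run (only when $Y$ is deleted), each bucket move costs $O(1)$, and the highest-nonempty-bucket pointer is monotonically non-increasing since counters never grow, so the total work spent relocating it (including the initial scan to place it) is $O(|\mathcal{Y}|)$. Adding the $O(N)$ setup cost gives a running time of $O(N)=O\big(\sum_{Y\in\mathcal{Y}}|Y|\big)$.

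I expect the running-time bound to be the only real obstacle: a naive implementation recomputes every element's count from scratch each round, spending $\Theta(N)$ per round and hence $\Theta\big(N\cdot(n/k)\log n\big)$ overall, which is far too slow. The linear bound relies on the amortization just sketched --- each incidence touched a constant number of times, and a bucket queue maintained in $O(1)$ amortized time thanks to the monotonicity of the counters --- which is routine but must be arranged with care. The size analysis is entirely standard and poses no difficulty.
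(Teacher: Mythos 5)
Your proof is correct. The paper in fact gives no standalone proof of this lemma: it defers to the more general Theorem~\ref{t:hitting-tree}, which hits $k$-hop root-leaf paths in a collection of trees using $2$-approximate counters and top-trees, running in $O(N\log^2 n)$ time. Your proof is the natural direct argument for the flat-set case: exact counters are affordable because a set, once hit, is gone, so every incidence is touched only at deletion time, and the monotone bucket queue turns the greedy maximizer into an amortized $O(1)$ operation. This is a strictly simpler route and it actually recovers the $O\!\left(\sum_{Y}|Y|\right)$ bound that the lemma claims, rather than the $O(N\log^2 n)$ one would inherit from Theorem~\ref{t:hitting-tree}. Your remark that the true size bound is $O\big((n/k)\log|\mathcal{Y}|\big)$, collapsing to $O\big((n/k)\log n\big)$ only when $|\mathcal{Y}|=n^{O(1)}$, is a fair and correct observation; the same implicit polynomial-size assumption is present in the paper's Theorem~\ref{t:hitting-tree} (whose proof bounds the step count by $O\big(\tfrac{n}{k}\ln N\big)$), and it is harmless since every instance in the paper has at most $n^2$ paths/trees. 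One minor point worth stating explicitly in the write-up: elements of $X$ not appearing in any $Y$ can be discarded at setup, so the effective ground-set size is at most $N$ and initializing the counters and bucket array is genuinely $O(N)$ even if $n\gg N$.
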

We skip the proof of Lemma~\ref{l:hitting} since we later prove a more general result in~Theorem~\ref{t:hitting-tree}.
Recall that, by our simplifying assumptions, all the paths in $\Pi$, and thus in $\Pi_0$, are simple.
Using Lemma~\ref{l:hitting} we can compute a hitting set $H_0\subseteq V$ of $\Pi_0$ in $O(n^2h)$ time. $H_0$ has size $\Ot(n/h)$.

\paragraph{Computing a hitting set upon update.} To compute a hitting set $H\subseteq V\setminus D$
as required by Lemma~\ref{l:hitting-correct}, we perform the following additional steps upon update.
Recall that the precomputed set $H_0\subseteq V$ hits all (simple) paths in $\Pi\cap \Pi'$
with at least $h/2$ hops, and thus also those in $\Pi\cap \Pi'$ that have at least~$h$ distinct vertices.
We will augment $H_0$ into $H$ so that it also hits all the paths in $\Pi'\setminus \Pi$
with at least $h$ distinct vertices.

Recall from Lemma~\ref{l:dijkstra-rebuild} that for a fixed $s\in V$, all the paths $\pi'_{s,t}$,
where $t\in Q_s$,
are encoded using a tree $T_s$.
By construction, for each edge $e$ of $T_s$, we have that the tail of $e$ is $s$,
or the head of $e$ is in~$Q_s$.
Consider a subtree $T_s[u]$ rooted at some child $u$ of $s$ in $T_s$.
If the edge $su$ in $T_s$ corresponds to the path $\pi_{s,u}$ with $|\pi_{s,u}|\geq h/2$
then $H_0$ hits $\pi_{s,u}$.
As a result, for all $t\in Q_s\cap V(T_s[u])$, $V(\pi_{s,u})\subseteq V(\pi'_{s,t})$ and hence
if $|V(\pi'_{s,t})|\geq h$ then $H_0$ hits $V(\pi'_{s,t})$.
Otherwise either $su$ is a single edge from $G-D$, or it corresponds to a path $\pi_{s,u}$ with $|\pi_{s,u}|<h/2$.
In either of these cases, if some $t$ is at depth less than $h/2-1$
in $T_s[u]$, then $|V(\pi'_{s,t})|<h/2+1+(h/2-1)=h$, so the path $\pi'_{s,t}$ does not need to be hit by $H$.
Consequently, observe that it is enough for $H$ to hit all the $(h/2-1)$-hop root paths in $T_s[u]$
in order to have $V(\pi'_{s,t})\cap H\neq \emptyset$ for each $t\in T_s[u]$ with $|V(\pi'_{s,t})|\geq h$.

Let $\mathcal{Z}$ be the collection of all the subtrees $T_s[u]$, where $s\in V$ and $su\in E(T_s)$.
It is now enough to compute an $\Ot(n/h)$-sized hitting set $H_1$ of each of the $(h/2-1)$-hop root paths in all trees in~$\mathcal{Z}$.
Then, $H_0\cup H_1$ will form a desired hitting set $H$ of all the paths in $\Pi'$ with at least $h$ distinct vertices.
To this end, we could use a well-known variant of Lemma~\ref{l:hitting} due to King~\cite[Lemma~5.2]{King99}.
However, the running time of that algorithm cannot be easily bounded with the total size $N$
of $\mathcal{Z}$ (i.e., $N=\sum_{T\in\mathcal{Z}}|T|$) exclusively;
its running time is $O\left(N+\sum_{T\in \mathcal{Z}}\min(n\log{n},|T|k)\right)=O(\min(Nk,|\mathcal{Z}|n\log{n}))$
if one desires to hit $k$-hop root paths.
Though, for some important cases, e.g., when $\mathcal{Z}$ contains $n$ trees with $\Theta(n)$ vertices each, the
running time is near-linear in $N$ for any~$k$.
Unfortunately, this might not be the case in our scenario.
Instead, we present a more sophisticated near-linear (independent of~$k$) time algorithm for this task.

\begin{theorem}\label{t:hitting-tree}
  Let $V$ be a vertex set of size $n$ and let $\mathcal{Z}$ be a family of trees on~$V$.
  Let ${N=\sum_{T\in \mathcal{Z}}|T|}$.
  For any $k\in [1,n]$, in $O(N\log^2{n})$ time one can deterministically compute an $O(n/k \cdot\log{n})$-sized \emph{hitting set} $H\subseteq V$
  of all the $k$-hop root paths in all the trees in $\mathcal{Z}$.
\end{theorem}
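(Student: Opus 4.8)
The plan is to simulate the greedy hitting set algorithm of Lemma~\ref{l:hitting} \emph{approximately}, so that each greedy step can be implemented using a dynamic tree data structure such as top trees or the structure of Alstrup, Holm, de Lichtenberg, and Thorup~\cite{AlstrupHLT05}. First recall how exact greedy works here: maintain for every vertex $v$ a count $c(v)$ of currently-unhit $k$-hop root paths passing through $v$; repeatedly pick the vertex $v^\ast$ maximizing $c(v^\ast)$, add it to $H$, and then remove from consideration all root paths that now contain a hitting-set vertex. Since every unhit path has $\geq k$ vertices (more precisely, $k$ edges, so $k+1$ vertices), a standard averaging argument shows that if $P$ paths remain then some vertex lies on $\geq Pk/n$ of them, so each step kills a $(\geq k/n)$-fraction of the surviving paths; after $O((n/k)\log n)$ steps all are hit, giving the claimed size bound. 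The difficulty is that maintaining the exact counts $c(v)$ and the exact argmax under path deletions seems to cost $\Omega(N k)$ in the worst case, exactly what we are trying to avoid.

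The key idea to circumvent this is twofold. First, we never work with individual paths; we work with \emph{endpoints of surviving root suffixes}. For each tree $T\in\mathcal{Z}$ with root $r$, the relevant objects are the $k$-hop root paths $T[r\to t]$; once some ancestor of $t$ (within distance $k$) is in $H$, that path is killed. So for each $t$ we keep a boolean ``$t$ is still active,'' and $c(v)$ is the number of active $t$ lying in the subtree $T[v]$ at depth between $\mathrm{depth}(v)$ and $\mathrm{depth}(v)+k$. Second, instead of the true maximum we settle for any vertex $v$ with $c(v)\geq \tfrac12 \cdot (\text{true max})$, or equivalently $c(v)\geq \tfrac12\cdot P k / n$ where $P$ is the current number of active endpoints — this only costs a factor of $2$ in the number of rounds, hence still $O((n/k)\log n)$ total. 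Finding such an approximately-heaviest vertex, and then deactivating all newly-hit endpoints, is what we implement with dynamic trees.

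Concretely, I would process the trees one at a time (they share the ground set $V$ but the counts are per-tree; a vertex's global count is the sum over trees, so to approximate the global argmax we can either merge all trees into one forest with a super-root or, more simply, observe that a $\tfrac12$-approximate maximizer within the single tree contributing the most is a $\tfrac{1}{2|\mathcal Z|}$-approximation globally — this is too weak, so the cleaner route is to link all roots to a common auxiliary node and run the argument on the single resulting tree of size $O(N)$, where ``$k$-hop from the original root'' becomes ``depth in $[1,k]$''). On this single tree $\mathcal T$ of size $O(N)$ we maintain a dynamic-tree structure over the edges supporting: (i) adding $+1$ to the active-count of a to-be-hit endpoint leaf and subtree-aggregate queries, and (ii) \emph{ancestor-at-distance-$d$} queries and \emph{path-marking}. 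For step~(i), to locate an approximately-heaviest vertex, note $c(v)$ is a subtree sum restricted to a depth window; such windowed subtree sums over a static tree are supported in $O(\log N)$ time after an $O(N)$-time Euler-tour / heavy-path decomposition, and we can find a vertex with $c(v)\ge \tfrac12\max_u c(u)$ by a top-tree search (or: since $\sum_v$ over any root-to-leaf chain of the windowed counts telescopes, we can binary search down a heavy path). For step~(ii), once $v^\ast$ is chosen we must deactivate every active leaf $t$ with $v^\ast$ on $\mathcal T[r\to t]$ at distance $\le k$ — i.e.\ every active leaf in $\mathcal T[v^\ast]$ within depth $k$; we mark $v^\ast$ and lazily, whenever we later touch a leaf $t$, check whether any ancestor within distance $k$ is marked (an ``is there a marked node on this length-$k$ suffix'' query, again $O(\log N)$ with the dynamic tree) and if so deactivate $t$ and subtract its contribution. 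Since each leaf is deactivated once and each of the $O((n/k)\log n)$ rounds costs $O(\log N)$ plus $O(\log N)$ per leaf it deactivates, the total is $O(N\log N)$ for the deactivations plus $O((n/k)\log n\cdot \log N) = O(N\log^2 N)$ for the round overhead; rounding everything up to $O(N\log^2 n)$ (using $N \le n^2$ so $\log N = O(\log n)$) gives the stated bound.

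The main obstacle I expect is making the ``approximately heaviest vertex'' subroutine genuinely run in $O(\mathrm{polylog})$ time per invocation while the depth-windowed counts $c(v)$ change under deactivations: a naive recomputation is $\Theta(N)$ per round. The resolution is that deactivating a leaf $t$ changes $c(v)$ by $-1$ only for the $\le k$ ancestors of $t$ in its length-$k$ suffix, which is a single root-path update of potentially length $k$ — too slow if done vertex-by-vertex, so it must be done as an $O(\log N)$ path operation in the dynamic tree (decrement an aggregate along a path). This forces the data structure to support path-aggregate \emph{updates} as well as the windowed subtree-max queries simultaneously; reconciling these two views (path operations and depth-windowed subtree operations) on the same tree is the delicate engineering step, handled by combining a heavy-path decomposition (for the path operations) with a balanced BST over each heavy path carrying depth-indexed segment-tree-like aggregates (for the windowed subtree queries), exactly the kind of composite structure for which \cite{AlstrupHLT05} is designed.
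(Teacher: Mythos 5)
You correctly identify the paper's key insight---simulate the greedy algorithm of Lemma~\ref{l:hitting} only approximately (a $2$-approximate maximizer suffices and costs only a constant factor in the round count), which opens the door to dynamic-tree machinery and the target $O((n/k)\log n)$ hitting set size---and you correctly reduce the objects to ``endpoints of surviving root suffixes'' rather than individual paths. However, the implementation you sketch has two genuine gaps. First, linking all tree roots under a common super-root does \emph{not} solve the aggregation problem you yourself raise: a vertex $v\in V$ that lies in several trees still corresponds to several distinct nodes in the merged tree $\mathcal{T}$, one copy per tree, and the quantity you must (approximately) maximize is the sum of $v$'s per-tree counts over all of these copies. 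A depth-windowed subtree sum at a single copy of $v$ in $\mathcal{T}$ says nothing about the other copies, so the $\frac{1}{2|\mathcal{Z}|}$ loss you were trying to avoid reappears. Second, the central subroutine you lean on---``find a vertex with $c(v)\ge \frac12\max_u c(u)$ by a top-tree search'' or by ``binary search down a heavy path because the windowed counts telescope''---is neither a standard top-tree primitive nor justified: the windowed counts are not monotone along root-leaf chains once the window slides, and combining heavy-path decomposition with depth-indexed segment trees is asserted rather than shown to support a polylogarithmic approximate-argmax under the path-decrement updates you also need.

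The paper's proof dodges both issues by a different mechanism that your proposal is missing. It first prunes every $T\in\mathcal{Z}$ so that all leaves lie at depth exactly $k$; this eliminates the depth window entirely, and the per-tree count $d_{v,T}$ becomes an ordinary count of (still-active) leaves in the subtree rooted at $v$. It then never searches a tree for an approximate maximizer. Instead, it explicitly maintains, for each tree $T$, a partition of $V(T)$ into $O(\log n)$ buckets $V_{T,i}$ with $v\in V_{T,i}$ iff $d_{v,T}\in[2^i,2^{i+1})$, using a small hierarchy of top-trees $S_{T,i}$ (supporting path-add and global-min) only to detect when $d_{v,T}$ drops below $2^i$. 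Since the $d_{v,T}$ only decrease, each vertex changes bucket $O(\log n)$ times per tree, $O(N\log n)$ moves total. The aggregate $2$-approximate counter $c_v=\sum_{T,i:\,v\in V_{T,i}}2^i$ is then a concrete integer maintained \emph{explicitly across all trees} and stored in a single priority queue, from which the (approximate) argmax is extracted directly. Deactivation is eager and charged to ``previously unvisited'' descendants, so each node is touched $O(1)$ times; the lazy scheme you propose would leave the counters stale and break the potential argument. The per-tree bucketing combined with the cross-tree priority queue is the actual engine of the theorem and is what your plan needs to fill in.
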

\begin{proof}
  We first iteratively prune the trees in $\mathcal{Z}$ of all the leaves at depths not equal to $k$:
  this does not alter the set of subpaths required to be hit.
  Afterwards, the task is to hit all the root-leaf paths in the collection~$\mathcal{Z}$, each of exactly
  $k$ hops.

  Similarly as in~\cite{King99}, we would like to simulate the greedy algorithm behind Lemma~\ref{l:hitting}, that is,
  repeatedly pick a vertex $v\in V$ hitting the largest number of paths not yet hit,
  and add it to the constructed set $H$.
  However, we cannot afford to follow this approach directly. Instead, when $L\geq 1$ paths
  are remaining to be hit, and there is $n'\geq k+1$ vertices $V\setminus H$ that have not yet been chosen, we 
  pick a vertex hitting at least $\frac{L(k+1)}{2n'}$ remaining paths.
  Note that there always exists a vertex hitting at least $\frac{L(k+1)}{n'}$ remaining paths, since otherwise some
  of the remaining paths would contain a vertex from outside $V\setminus H$, a contradiction.
  A~single step in our approach reduces $L$ to at most $\left(1-\frac{k+1}{2n'}\right)L$, so $\lceil 2n'/(k+1)\rceil=O(n/k)$ steps
  reduce $L$ to at most $L/e$.
  Hence, since $L$ is an integer, after $O\left(\frac{n}{k}\ln{L}\right)=O\left(\frac{n}{k}\ln{N}\right)$ steps
  $L$ will drop to $0$, i.e., all required paths will be hit.

  Our strategy can be also rephrased as follows: maintain $2$-approximate counters \linebreak ${\{c_v:v\in V\}}$
  such that the vertex $v$ hits between $c_v$ and $2c_v$ of the remaining paths, and repeatedly
  pick a vertex $z$ with the maximum value of $c_z$. By the above discussion, the picked $z$
  will always satisfy $c_z\geq L(k+1)/2n'$, as desired.
  To implement this strategy, we proceed as follows.

  For each $T\in \mathcal{Z}$, and $v\in V(T)$, let $d_{v,T}$ be the \emph{exact}
  number of \emph{previously not hit} root-leaf paths in $T$ that $v$ hits.
  Note that through the entire collection, $v$ hits $D_v:=\sum_{T\in\mathcal{Z}} d_{v,T}$ paths not yet hit.
  Observe that when a root-leaf path to the leaf $l$ in $T$ is hit for the first time, the value
  $d_{v,T}$ of all the ancestors $v$ of $l$ gets decreased by one.
  In fact, the algorithm of~\cite{King99} can be seen to maintain such values $d_{v,T}$ and $D_v$ explicitly.
  However, this is too costly for us; we will instead maintain the exact values $d_{v,T}$ only implicitly, in a data structure.

  For each $T\in\mathcal{Z}$, we keep $V(T)$ (explicitly) partitioned into subsets
  $V_{T,0},\ldots,V_{T,\ell}$, where ${\ell=O(\log{|V(T)|})}$,
  so that $v\in V_{T,i}$ iff $d_{v,T}\in [2^i,2^{i+1})$.
  Throughout the process, the values $d_{v,T}$ will only decrease, so
  a vertex $v\in V(T)$ can only move $O(\log{n})$ times
  to a subset $V_{T,j}$ with a lower value~$j$.
  Let us first argue that maintaining such partitions yields
  the desired 2-approximate counters rather straightforwardly.
  
  For $v\in V$, let us define $c_v=\sum_{T,j:v\in V_{T,j}} 2^j$. Then, we have:
  \begin{equation*}
    D_v=\sum_{T\in\mathcal{Z}}d_{v,T}\geq \sum_{T,j:v\in V_{T,j}} 2^j=c_v=\sum_{T,j:v\in V_{T,j}} 2^j=\frac{1}{2}\sum_{T,j:v\in V_{T,j}} 2^{j+1}> \frac{1}{2}\sum_{T\in\mathcal{Z}}d_{v,T}=\frac{1}{2}D_v.
  \end{equation*}
  As a result, the counters $c_v$ indeed $2$-approximate the values $D_v$ and can be maintained~subject to changes
  in the partitions $V_{T,i}$, for all $T,i$, in $O\left(\sum_{T\in\mathcal{Z}}|T|\log{n}\right)=O(N\log{n})$ time.

  Fix some $T\in\mathcal{Z}$. To maintain the partition $V_{T,0},\ldots,V_{T,\ell}$, we maintain
  the values $d_{v,T}$ using~$\ell$ data structures $S_{T,0},\ldots,S_{T,\ell}$.
  The data structure $S_{T,i}$ associates (implicitly) the following vertex weights to the individual vertices $v$ of $T$.
  If $d_{v,T}\geq 2^i$, then $v$ has weight $d_{v,T}$ in $S_{T,i}$.
  Otherwise, if $d_{v,T}<2^i$, then $v$ has weight $\infty$ in $S_{T,i}$.
  In particular, $S_{T,0}$ associates the exact values $d_{v,T}$ to the vertices of~$T$.

  Fix some $i=0,\ldots,\ell$. $S_{T,i}$ is implemented using, e.g., a top-tree~\cite[Theorem~2.4]{AlstrupHLT05} that allows performing the following operations, both in $O(\log{n})$ time\footnote{As a matter of fact, in~\cite{AlstrupHLT05} this is shown for edge weights. However, vertex weights can be simulated easily using edge weights by assigning each vertex its parent edge, and explicitly maintaining the weight of the root.}:
  \begin{enumerate}[(1)]
    \item adding the same $\delta\in\mathbb{R}$ to the weights of vertices on some specified path in the tree, and
    \item querying for a vertex of the tree with minimum weight.
  \end{enumerate}
  Clearly, $S_{T,i}$ can
  be initialized at the beginning of the process in $O(|T|\log{n})$ time.
  When a new vertex $z$ is added to $H$, and $z\in V(T)$, we iterate through all the
  (previously unvisited) descendants of $z$ to identify the (original) leaves~$y$ at depth $k$ such that
  the root-to-$y$ path in~$T$ has not been previously hit.
  For each such $y$, we decrease the weights of all the ancestors of $y$ in $T$ (all lying
  on a single path in $T$) by $1$. This requires a single top-tree operation on~$S_{T,i}$.
  Afterwards, for all $w\in V(T)$ whose value $d_{w,T}$ was at least $2^i$ before adding $z$
  to~$H$, $S_{T,i}$ contains (in an implicit way) the correctly updated exact value $d_{w,T}$.
  Some of these values in $S_{T,i}$ might drop below $2^i$, though.
  To deal with this, we repeatedly attempt to extract the minimum-valued vertex $x\in V(T)$
  from $S_{T,i}$. If the value of $x$ is less than $2^i$, we reset the value of $x$ in $S_{T,i}$
  to~$\infty$. Otherwise, we stop; at this point all the values in $S_{T,i}$ are at least $2^i$; the invariant posed on $S_{T,i}$ is fixed.

  The above update procedure is performed for each $i$. Observe that $v\in V_{T,i}$
  iff $i$ is the maximum index such that $v$ has assigned a finite value in $S_{T,i}$.
  Since for all $i$ we can explicitly track which vertices in $S_{T,i}$ are assigned $\infty$ while performing updates,
  the time needed to maintain the partition $V_{T,0},\ldots,V_{T,\ell}$
  can be charged to the cost of maintaining the data structures $S_{T,0},\ldots,S_{T,\ell}$.

  Let us now analyze the time cost of this algorithm.
  For each $T\in\mathcal{Z}$, we iterate through every vertex of $T$ at most $O(1)$ times. For $i=0,\ldots,\ell$,
  at most $O(|V(T)|+|H\cap V(T)|)=O(|V(T)|)$ top-tree operations are performed on $S_{T,i}$.
  Hence, the cost of maintaining all $S_{T,i}$ for all ${i=0,\ldots,O(\log{n})}$ is $O(|T|\log^2{n})$.
  Through all $T\in\mathcal{Z}$, this is $O(N\log^2{n})$.
  
  To implement finding a next vertex $z\in H$ with the largest~$c_z$, one may simply store the counters~$c_z$ in a priority queue.
  Since the counters are updated $O(N\log{n})$ times in total, the priority queue operations cost is $O(N\log^2{n})$ as well.
\end{proof}

Observe that through all $s$, the total number of edges in trees added to $\mathcal{Z}$
can be bounded by the number of edges in the (compressed) trees $T_s$ of Lemma~\ref{l:dijkstra-rebuild},
and thus also by $\Ot(\min(\tau\Delta,n^2))$.
As a result, by Theorem~\ref{t:hitting-tree}, the desired set $H$ hitting all paths $\pi'_{s,t}$
with at least $h$ distinct vertices can be computed in $\Ot(\tau\Delta)$ time, using at most quadratic space.
This does not increase the running time of the update procedure in the asymptotic sense.

\subsection{Reducing the space usage}\label{s:space}
So far, the space used by the preprocessing phase could only be bounded by $O(n^2h)$
as we have explicitly stored the $O(n^2)$ preprocessed paths
$\pi_{s,t}\in\Pi$, each with $O(h)$ hops.

We do not, however, need to store the paths $\pi_{s,t}\in \Pi$ explicitly. For performing updates and
answering distance queries,
we only require
storing the values $\len(\pi_{s,t})$, $|\pi_{s,t}|$, and being able to efficiently access the sets $\Pi(v)$,
for any $v\in V$. If we want to also support path queries, then
constant-time reporting of the subsequent edges of $\pi_{s,t}$ is also needed.
Probst Gutenberg and Wulff-Nilsen~\cite[Section~4.2]{GutenbergW20b} showed an elegant way of achieving that in a slightly relaxed way using only
$\Ot(n^2\log{h})$ space.
\begin{lemma}\label{l:space}{\upshape\cite{GutenbergW20b}}
  Let $G=(V,E)$ be a real-weighted digraph with no negative cycles. Let $s\in V$ and let $h\in [1,n]$. Using $O(mh)$ time
  and $O(nh)$ space, one can build an $\Ot(n)$-space data structure representing
  a collection $\{\pi_t:t\in V\}$ of (not necessarily simple) $O(h)$-hop paths from $s$ to all other vertices in $G$ such that for any $t$,
  $\len(\pi_t)\leq \dist^h_G(s,t)$.

  For any $v\in V$, the data structure allows:
  \begin{itemize}
    \item accessing $\len(\pi_v)$ and $|\pi_v|$ in $O(1)$ time,
    \item reporting the set $P_v=\{t\in V:v\in V(\pi_t)\}$ in $\Ot(|P_v|)$ time,
    \item reporting the edges of $\pi_v$ in $O(|\pi_v|)$ time.
  \end{itemize}
\end{lemma}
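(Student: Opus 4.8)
The plan is to reproduce the construction of Probst Gutenberg and Wulff-Nilsen~\cite[Section~4.2]{GutenbergW20b}. I would start by running the $h$-round Bellman--Ford algorithm from $s$: in $O(mh)$ time and $O(nh)$ \emph{temporary} space this produces $\dist_G^i(s,v)$ for all $0\le i\le h$ and all $v\in V$, together with round-indexed predecessor pointers. From these tables, for every $v$ with $\dist_G^h(s,v)<\infty$ one reads off, in the obvious way, a concrete $s\to v$ walk of at most $h$ hops and length exactly $\dist_G^h(s,v)$, obtained by following predecessors starting from the first round at which $v$ attains its $h$-hop distance; in particular $\len(\pi_v)\le\dist_G^h(s,v)$, which is all the lemma asks. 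The whole content of the lemma is therefore to compress the representation of $\{\pi_v:v\in V\}$ from the trivial $\Theta(nh)$ bound down to $\Ot(n)$, while still supporting, efficiently, (a) lookup of $\len(\pi_v)$ and $|\pi_v|$, (b) enumeration of the edges of $\pi_v$, and (c) enumeration of $P_v=\{t:v\in V(\pi_t)\}$.

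For the compression I would use recursive path-halving. Use the hop bounds $1,2,4,\dots,2^{\lceil\log h\rceil}$; at level $j$, for every $v$ reachable within $2^j$ hops, maintain a handle $\sigma_v^{(j)}$ to a walk from $s$ to $v$ with at most $2^j$ hops and length $\dist_G^{2^j}(s,v)$, annotated with its length and hop-count. To pass from level $j-1$ to level $j$: if $\dist_G^{2^j}(s,v)=\dist_G^{2^{j-1}}(s,v)$ we simply reuse $\sigma_v^{(j-1)}$; otherwise an optimal $\le 2^j$-hop walk to $v$ must have more than $2^{j-1}$ hops, and cutting it at its hop-$2^{j-1}$ vertex $m$ leaves a prefix that is a shortest $\le 2^{j-1}$-hop walk from $s$ to $m$ — hence may be replaced by $\sigma_m^{(j-1)}$ — and a suffix that is a shortest $\le 2^{j-1}$-hop walk from $m$ to $v$ and is handled by its own recursive decomposition. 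Represent every walk as a \emph{persistent} balanced sequence (a persistent finger tree or balanced search tree); then assembling $\sigma_v^{(j)}$ costs $O(\log h)$ newly created nodes per vertex, and summing over the $n$ vertices and the $O(\log h)$ levels bounds the total number of created nodes, and hence the final space, by $\Ot(n)$. It is worth noting that one cannot shortcut all this by simply taking $\pi_v$ to be the root-to-$v$ path of a single Bellman--Ford shortest-path tree on $V$: such a tree can have depth $\Theta(n)$ even when every $\dist_G^h(s,v)$ is realized by an $O(h)$-hop walk (e.g.\ a long negative-weight chain together with a cheap edge into its start and expensive direct edges from $s$), so the resulting paths would be far too long.

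Given the structure, the three operations are routine. The length and hop-count of $\pi_v$ are the stored annotation of its handle, so they are available in $O(1)$ time. The edges of $\pi_v$ are produced by an in-order traversal of its persistent sequence in $O(|\pi_v|)$ time. For $P_v$, one additionally indexes, along the recursive structure, which top-level walks currently pass through each vertex (updated whenever handles are concatenated, and read off by descending through the midpoint decomposition of each candidate handle); this yields an output-sensitive $\Ot(|P_v|)$-time enumeration.

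The main obstacle — and the genuine technical heart of~\cite[Section~4.2]{GutenbergW20b} — is the accounting that keeps the total size $\Ot(n)$ rather than $\Ot(nh)$. One must argue that the suffixes created when cutting at midpoints are effectively \emph{shared} across targets, i.e.\ assembled from a near-linear pool of canonical sub-walks rather than rebuilt from scratch at each level, and one must lay out the persistent sequences and the ``walks through $v$'' indices so that concatenations are cheap and all three queries meet their bounds. By contrast the Bellman--Ford computation, the midpoint cuts, and the bounds on the lengths and hop-counts of the produced walks are all easy; the difficulty is entirely in the near-linear-space bookkeeping.
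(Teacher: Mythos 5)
Your construction diverges from the one in \cite{GutenbergW20b} that the paper sketches, and the divergence creates a genuine gap at exactly the step you flag as the ``technical heart.'' When you split a shortest $\le 2^j$-hop walk $s\to v$ at its hop-$2^{j-1}$ vertex $m$, the prefix $s\to m$ is indeed an already-maintained persistent handle $\sigma_m^{(j-1)}$, so concatenating it in costs $O(\log h)$ new nodes. But the suffix $m\to v$ is a walk from a \emph{different} source $m$; it is not among the handles $\sigma_{\cdot}^{(\cdot)}$ you maintain (all of which are rooted at $s$), and there is no pre-built persistent sequence for it. ``Handled by its own recursive decomposition,'' taken literally, means building a fresh balanced tree over $\Theta(2^{j-1})$ edges, so assembling $\sigma_v^{(j)}$ costs $\Theta(2^{j-1})$ new nodes, not $O(\log h)$, and the total space stays $\Theta(nh)$ --- precisely what had to be avoided. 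The sharing you gesture at in the closing paragraph (that midpoint suffixes across targets draw from a near-linear pool of canonical sub-walks) is the entire claim, and you neither prove it nor reduce it to the cited construction; it is also not obviously true, since distinct targets with the same midpoint $m$ need not share any vertex beyond $m$.

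The paper's proof avoids non-$s$ sources altogether and hence never meets this obstacle. It computes shortest $\le h$-hop paths $p_t$ from $s$, builds an $\Ot(n/h)$-sized \emph{hitting set} $H$ intersecting the $\lceil h/3\rceil$-hop middle window of every $p_t$ with $|p_t|\ge\lceil 2h/3\rceil$, stores explicitly only the $\Ot(n/h)$ paths $p_y$ for $y\in H$ (total $\Ot(n)$ space), and adds shortcut edges $sy$ of weight $\len(p_y)$. Every $t$ then admits an $s\to t$ walk in the shortcut graph of at most $\lceil 2h/3\rceil$ hops and no greater length, so one recurses with the \emph{same source} $s$ and a geometrically shrinking hop bound; over $O(\log h)$ levels the explicit storage is $\Ot(n)$ by construction, not by appeal to unproven sharing, and $\pi_t$ is recovered by expanding the leading shortcut edge of the recursively obtained path. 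Your Bellman--Ford table, midpoint cut, and the length/hop bookkeeping are all fine; what is missing is the actual compression mechanism --- the hitting set together with the shortcut edges --- that makes the $\Ot(n)$ space bound hold.
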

\begin{proof}[Proof sketch]
  Suppose we compute shortest $h$-hop-bounded $s\to t$ paths $p_t$ from $s$ to all $t\in V$.
  This takes $O(mh)$ time, but storing the computed paths explicitly would require $\Theta(nh)$ space.
  Recall that if $G$ has no negative cycles, then we may wlog. assume that the paths $p_t$ 
  are all simple.
  As a result, one can deterministically compute an $\Ot(n/h)$-sized hitting set~$H$
  of the $\lceil h/3\rceil$-hop infixes starting at the $(\lceil h/3\rceil+2)$-th hop
  of those of the computed $p_t$ that satisfy $|p_t|\geq \lceil 2h/3\rceil$.
  We explicitly store the paths $p_y$ for all $y\in H$ which costs only $\Ot(|H|\cdot h)=\Ot(n)$ space.

  Let $G'$ be obtained from $G$ be adding shortcut edges $e_y=sy$ of weight $\wei(e_y)=\len(p_y)$
  for all $y\in H$.
  Note that for all $v\in V$, $\dist_{G'}^{\lceil 2h/3\rceil}(s,v)\leq \dist_{G}^h(s,v)=\len(p_v)$
  and every $\leq \lceil 2h/3\rceil$-hop path in $G'$ corresponds to a path in $G$
  with at most $h+\lceil h/3\rceil$ hops.

  We recursively solve the problem on the graph $G'$ with hop-bound $h'=\lceil 2h/3\rceil$.
  Let $\{\pi_t':t\in V\}$ be the obtained set of paths.
  For every $t\in V$, we define $\pi_t$ to be $\pi'_t$ with possibly the first shortcut edge $e_z$ expanded
  to the corresponding path $p_z$. 
  One can easily prove by induction that $|\pi_t|=O(h)$ and $\len(\pi_t)\leq \dist_G^h(s,t)$.
  The recursion depth is clearly $O(\log{h})$.

  Finally, each of the explicitly stored $\Ot(n/h)$ paths $p_t$ at some level of the recursion can be
  imagined to point to at most one path of the previous level (corresponding to a shortcut
  edge) and some $O(h)$ distinct vertices of $G$. By keeping only the nodes
  reachable from the paths at the last level of the recursion in this pointer
  system, and storing reverse pointers, we can report the elements of each
  $P_v$ so that every element gets reported $O(\log{n})$ times.
\end{proof}

To reduce the space to $\Ot(n^2)$, we simply replace the Bellman-Ford-like procedure run on $G-C$
in the preprocessing of Lemma~\ref{l:prep} with the construction of Lemma~\ref{l:space}.
The total congestion of all the vertices can increase only by a constant factor then.
In Section~\ref{s:deterministic} we have assumed that the preprocessed paths $\pi_{s,t}$
were simple when hitting all $\pi_{s,t}$ satisfying $|\pi_{s,t}|\geq h/2$ with~$H_0$.
But we can as well assume that $H_0$ hits all $\pi_{s,t}$ with $|V(\pi_{s,t})|\geq h/2$
instead. Even though the paths represented by the data structure of Lemma~\ref{l:space}
might be non-simple, we can compute the sizes $|V(\pi_v)|$ within the same bound easily.
Moreover, the algorithm behind Lemma~\ref{l:hitting} can be implemented so that it
requires only $O(n)$ additional
space if it is possible to (1) iterate through the elements of individual sets of $\mathcal{Y}$
in $O(1)$ time per element, and (2) report the sets $Y\in\mathcal{Y}$ containing a given $x\in X$
in near-linear time in the number of reported sets. This is precisely what Lemma~\ref{l:space} enables.

\subsection{Negative edges and cycles}\label{s:negative}
In this section, we briefly describe the modifications to the data structure needed to handle
negative edge weights and possibly negative cycles.

First of all, we run in parallel a deterministic fully dynamic negative cycle detection algorithm
with $\Ot(m)$ worst-case update time (see, e.g.,~\cite{Karczmarz21}).
That algorithm also maintains a feasible price function~$p$ of the current graph $G$.
With this in hand, whenever $G$ has a negative cycle, we refrain from running the update
procedure and forbid issuing queries. Otherwise, $p$ is also a feasible price function of $G-D\subseteq G$, and thus
the Dijkstra-based update procedure can simply use $p$ to ensure that all
the edge and path lengths accessed are non-negative.

In the basic randomized variant of our data structure we don't need to alter
the preprocessing at the beginning of a phase at all. Indeed, our basic analysis did not
require that the paths $\pi'_{s,t}$ are simple or with no negative cycles,
and $h$-hop-bounded shortest paths are well-defined even in presence of negative cycles.
In the $O(n^2h)$-space deterministic variant (Section~\ref{s:deterministic}), similarly as in Section~\ref{s:space},
we may compute the hitting set $H_0$ only for those $\pi_{s,t}$ that satisfy
$|V(\pi_{s,t})|\geq h/2$. Recall that if the update procedure is run, then $G-D$ has no negative cycle and hence no path $\pi_{s,t}$
containing a negative cycle survives in $G-D$ anyway.

Finally, the preprocessing algorithm behind Lemma~\ref{l:space} internally
uses hitting-set arguments (valid for simple paths) and requires, out-of-the-box, that there are no negative cycles.
We now sketch how to deal with negative cycles while using the space-saving Lemma~\ref{l:space}.

Whenever the preprocessing in Lemma~\ref{l:space}
for source $s$ encounters a path $p_t$ containing a negative cycle, we use it as the desired path $\pi_t$, but discard it when computing
a hitting set and thus also in the recursive preprocessing in Lemma~\ref{l:space} --
effectively making reporting $\pi_t$ (in any way) during update or query impossible. Similarly, such a path is included
as $\pi_{s,t}\in\Pi$ in Lemma~\ref{l:prep} only implicitly and marked as \emph{negative},
but nevertheless used for updating the congestion counters $\alpha(\cdot)$ during the preprocessing.
Note that during the update procedure, if~$G$ has no negative cycles, then for each ``negative'' path $\pi_{s,t}$,
we have $V(\pi_{s,t})\cap D\neq \emptyset$. The used charging scheme ensures
that we can afford reconstructing the path $\pi'_{s,t}$ within the $\Ot(\tau\Delta)$ bound even though we do not know
which vertices of $D$ lie on~$\pi_{s,t}$.

\section{Algebraic fully dynamic reachability in sparse DAGs}\label{sec:dag}
In this section we show how the algebraic approach to dynamic reachability~\cite{Sankowski04}
can be applied in the case of sparse DAGs,
even without resorting to fast matrix multiplication~\cite{WilliamsXXZ24}.

Assume for simplicity that $m=|E(G)|\geq n$ at all times. We prove the following.
\begin{restatable}{theorem}{sparsereach}\label{t:sparse-reach}
  Let $G$ be a directed acyclic graph. Let $t\in [1,\sqrt{m}]$. There exist a Monte Carlo randomized data structure maintaining
  $G$ subject to fully dynamic single-edge updates with $O(mn/t)$ worst-case update time and supporting
  arbitrary-pair reachability queries in $O(t)$ time. The answers produced are correct with high probability.
\end{restatable}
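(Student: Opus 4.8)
The plan is to run Sankowski's algebraic dynamic‑reachability framework~\cite{Sankowski04}, replacing its fast‑matrix‑multiplication ``recompute the matrix inverse'' primitive by an elementary $O(nm)$‑time one that is available for DAGs. Fix a prime $p$ of magnitude $\poly(n)$ and work over $\field_p$. Assign to every edge $uv$ of $G$ an independent uniformly random weight $a_{uv}\in\field_p\setminus\{0\}$, let $A$ be the resulting weighted adjacency matrix, and put $M:=I-A$. Since $G$ is acyclic, $A$ is strictly upper triangular in a topological order, hence nilpotent, so $M$ is invertible with $N:=M^{-1}=\sum_{k\ge0}A^{k}$ and $N_{s,t}=\sum_{P:\,s\to t}\prod_{e\in P}a_e$, the sum over all (necessarily simple) $s\to t$ paths. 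Distinct simple paths have distinct edge sets and thus contribute distinct multilinear monomials, so this is a nonzero polynomial in the $a_e$ exactly when $t$ is reachable from $s$; by Schwartz--Zippel, with $p$ a large enough polynomial, with high probability every entry of $N$ is nonzero iff the corresponding pair is connected (union bound over all $n^2$ pairs), and re‑sampling all edge weights whenever $N$ is recomputed (see below) lets us phrase correctness as ``each query is answered correctly w.h.p.''. The one DAG‑specific observation is that $N$ can be computed \emph{from scratch} in $O(nm)$ time and $O(n^2)$ space: topologically sort $G$ and, processing vertices $v$ from sinks to sources, set the $v$‑th row of $N$ to $e_v^{\top}+\sum_{v\to u}a_{vu}\cdot(\text{row }u\text{ of }N)$; this is correct by $(I-A)N=I$ and costs $O(n\deg^+(v))$ per vertex.

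Next I would use the standard lazy‑rebuild scheme in phases of $t$ updates. At the start of a phase we recompute $N_0:=(I-A_0)^{-1}$ of the current graph $G_0$ in $O(nm)$ time, i.e.\ $O(nm/t)$ amortized per update; this is turned into a worst‑case bound by the usual interleaving/spreading of rebuilds, exactly as in~\cite{Sankowski04,AbrahamCK17}. After $j\le t$ single‑edge updates the current matrix is $M=M_0+UV^{\top}$ with $U,V\in\field_p^{n\times j}$: the $r$‑th update (an insertion, a deletion, or a weight change of an edge $a_rb_r$) appends the columns $\sigma_re_{a_r}$ and $e_{b_r}$ for a suitable scalar $\sigma_r$. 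We assume the updates keep $G$ acyclic, so $M$ stays invertible and hence so does $G_j:=I_j+V^{\top}N_0U$ (by the Woodbury determinant identity, $\det G_j=\det M/\det M_0=1$). Besides $N_0$ we maintain the $j\times j$ matrix $W_j:=G_j^{-1}$, the $n\times j$ matrices $Q:=N_0U$ and $P:=QW_j$, and the $j\times n$ matrix $R:=V^{\top}N_0$. Woodbury's identity gives $M^{-1}=N_0-PR$, so a reachability query $(s,t)$ is answered by computing $(N_0)_{s,t}-\langle\text{row }s\text{ of }P,\ \text{column }t\text{ of }R\rangle$ and reporting ``reachable'' iff the result is nonzero -- an $O(t)$‑time dot product (and $(N_0)_{s,s}=1\neq0$ handles $s=t$).

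Finally I would verify the update cost. On the $(j{+}1)$‑st update, of an edge $a\,b$: reading column $a$ and row $b$ of $N_0$ is $O(n)$; assembling the border of $G_{j+1}=\bigl(\begin{smallmatrix}G_j&g_1\\ g_2^{\top}&g_3\end{smallmatrix}\bigr)$ from those entries and the recorded scalars is $O(t)$; computing $W_{j+1}$ from $W_j$ via the Schur‑complement formula is a rank‑one update of $W_j$ plus one new row/column, i.e.\ $O(t^2)$; appending the new column of $Q$ (a scaled column of $N_0$) and the new row of $R$ (a row of $N_0$) is $O(n)$; and refreshing $P=QW_{j+1}$ is done in $O(nt)$ time by pushing the rank‑one change of $W$ and the new column of $Q$ through $Q$, the bottleneck being one $n\times j$ matrix--vector product (the required length‑$j$ vectors $W_jg_1$ and $g_2^{\top}W_j$ are by‑products of the Schur step). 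Altogether the update time is $O(nt+t^2+nm/t)$, which equals $O(nm/t)$ \emph{precisely because} $t\le\sqrt m$: then $t^2\le m$, and $t\le\sqrt m\le n$ gives $t^2\le m\le nm/t$ and $nt\le nm/t$ -- this is exactly why the tradeoff parameter is restricted to $[1,\sqrt m]$. Every stored matrix has $O(n^2)$ entries (since $t\le n$), so the space is $O(n^2)$. The step needing the most care is this bookkeeping -- in particular realising that maintaining $P$, which is what makes the query $O(t)$ rather than $O(t^2)$, stays within the $O(nm/t)$ budget -- together with setting up the randomized correctness/re‑randomization argument; the algebraic scaffolding itself is routine.
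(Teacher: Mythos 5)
Your proposal is correct and follows essentially the same route as the paper: plug a DAG-specific $O(nm)$-time ``recompute the inverse from scratch'' primitive (topological-order dynamic programming for $(I-A)^{-1}$) into Sankowski's phase-based low-rank-update framework, yielding $O(nm/t+nt)$ update and $O(t)$ query time, with $t\le\sqrt m$ making the $nt$ and $t^2$ terms negligible. The only cosmetic differences are that you spell out the Woodbury/Schur bookkeeping (which the paper delegates to~\cite{Sankowski04}) and that you randomize via fresh edge weights and Schwartz--Zippel, whereas the paper counts paths modulo a random prime à la~\cite{KingS02}; both yield the same w.h.p.\ guarantee.
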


Let us first review the path-counting approach to dynamic DAG-reachability~\cite{DemetrescuI04, KingS02, Sankowski04}. Identify the vertices of $G=(V,E)$ with $\{1,\ldots,n\}$.
Let $A(G)$ be the adjacency matrix of $G$, that is, an $n\times n$ matrix with the entry $A(G)_{ij}$ equal to $1$
if $ij\in E(G)$, and $0$ otherwise. 
\begin{lemma}\label{l:matrix-inverse}{\upshape (e.g.,~\cite{Sankowski04})}
  If $G$ is a DAG  then the matrix $I-A(G)$ is invertible
  and for all $u,v\in V$, $(I-A(G))^{-1}_{u,v}$ equals the number of $u\to v$ paths in $G$.
\end{lemma}
\begin{proof}[Proof sketch]
  Let $A:=A(G)$. Recall that for any $k\geq 0$, the power $A^k$ encodes the pairwise counts of paths of length exactly $k$ in~$G$.
  In a DAG, all paths have length less than $n$ and thus $A^n$ is a zero matrix. As a result,
  $(I-A)\cdot (I+A^1+A^2+\ldots+A^{n-1})=I$. Thus $(I-A)^{-1}=I+A+\ldots+A^{n-1}$.
\end{proof}
In particular, by the above, testing whether a respective entry $(I-A(G))^{-1}_{s,t}$ is non-zero
gives the answer to a reachability query $(s,t)$ in $G$.
As a result, Lemma~\ref{l:matrix-inverse} reduces fully dynamic reachability to the \emph{dynamic matrix inverse} problem.
Specifically, a single-edge update to $G$ translates to a single-entry matrix update on $I-A(G)$,
whereas a reachability query corresponds to an element query on the inverse $(I-A(G))^{-1}$.
Moreover, as shown in~\cite{KingS02}, if one maintains the path counts modulo a random
prime number $p=\Theta(n^c)$ for a sufficiently large constant $c$, then high-probability
correctness is guaranteed over a polynomial number of updates. In other words,
it is enough to maintain the inverse $(I-A(G))^{-1}$ over the field $\field=\mathbb{Z}/p\mathbb{Z}$;
arithmetic operations on the field elements can be assumed to take unit time then.

Sankowski~\cite{Sankowski04}
studied update/query tradeoffs for the dynamic matrix inverse problem.
One tradeoff, summarized by the following theorem, is of particular interest here.
\begin{theorem}\label{t:inverse-tradeoff}{\upshape\cite{Sankowski04}}
  Suppose a matrix $A\in\field^{n\times n}$ is subject to single-element updates that keep~$A$ non-singular at all times.
  
  Let $\delta\in (0,1)$. There exists a data structure maintaining $A^{-1}$ with $\Ot(n^{\omega(1,\delta,1)-\delta}+n^{1+\delta})$
  worst-case update time and supporting element queries on $A^{-1}$ in $O(n^\delta)$ time.
\end{theorem}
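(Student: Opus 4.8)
We prove Theorem~\ref{t:inverse-tradeoff}. The plan is to recall Sankowski's \emph{lazy-updates} construction for dynamic matrix inverse~\cite{Sankowski04}. The starting observation is that a single-element update is a rank-one additive perturbation: changing the entry $A_{ij}$ by a value $\Delta$ means $A\mapsto A+\Delta\,e_ie_j^\top$. We run the data structure in \emph{phases} of $\lceil n^\delta\rceil$ updates. At the beginning of a phase we have $A^{-1}$ available explicitly; call this matrix $M$, and let $A_0$ be the matrix at that point, so $M=A_0^{-1}$. After $t\le n^\delta$ updates within the phase the current matrix is $A_t=A_0+U_tV_t$, where $U_t\in\field^{n\times t}$ has columns $\Delta_ke_{i_k}$ and $V_t\in\field^{t\times n}$ has rows $e_{j_k}^\top$, for $k=1,\dots,t$. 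By the Woodbury identity,
\begin{equation*}
  A_t^{-1}=M-(MU_t)\,W_t\,(V_tM),\qquad W_t:=\bigl(I_t+V_tMU_t\bigr)^{-1},
\end{equation*}
where $MU_t$ is $n\times t$ (its columns are scaled columns of $M$), $V_tM$ is $t\times n$ (its rows are rows of $M$), and $W_t$ is only $t\times t$. Throughout the phase we maintain $M$, the matrices $MU_t$ and $V_tM$, the small inverse $W_t$, and --- the key to fast queries --- the $n\times t$ matrix $L_t:=(MU_t)W_t$, so that $A_t^{-1}=M-L_t(V_tM)$ and an element query $(a,b)$ is answered as $M_{ab}-\sum_{k=1}^t (L_t)_{ak}(V_tM)_{kb}$ in $O(t)=O(n^\delta)$ time. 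All arithmetic is over $\field$, so there are no numerical issues.

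Next I would spell out how the maintained objects are updated when the $(t{+}1)$-st update of the phase arrives. Appending the new column to $MU_t$ and the new row to $V_tM$ takes $O(n)$ time (both are a scaled column, resp.\ row, of the already-maintained matrices). Passing from $W_t$ to $W_{t+1}$ amounts to updating the inverse of the $t\times t$ matrix $W_t^{-1}=I_t+V_tMU_t$ under bordering by one extra row and column; the block-inverse (Sherman--Morrison) formula does this in $O(t^2)$ time as long as the arising Schur complement is nonzero. This is exactly where the hypothesis that $A$ stays non-singular enters: by the Sylvester determinant identity, $\det A_t=\det A_0\cdot\det(W_t^{-1})$, so every $W_t^{-1}$ is invertible; and since $\det(W_{t+1}^{-1})$ equals $\det(W_t^{-1})$ times the Schur complement, the latter is nonzero at every step. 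Updating $L_t$ to $L_{t+1}=(MU_{t+1})W_{t+1}$ is the only genuinely expensive step: expanding the product using the block form of $W_{t+1}$ shows that $L_{t+1}$ is $L_t$ plus a rank-one matrix together with one appended column, each computable in $O(nt)$ time. Over a phase this sums to $O(n\cdot n^{2\delta})$, i.e.\ $O(n^{1+\delta})$ per update in the worst case --- this is the origin of the $n^{1+\delta}$ term.

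When a phase ends, we \emph{flatten} the representation: we recompute $M\leftarrow M-L_{n^\delta}(V_{n^\delta}M)$. The dominant cost is the product of an $n\times n^\delta$ matrix by an $n^\delta\times n$ matrix, which takes $n^{\omega(1,\delta,1)}$ time; amortized over the $\lceil n^\delta\rceil$ updates of the phase this is $O(n^{\omega(1,\delta,1)-\delta})$ per update. (A one-time $O(n^\omega)$ computation of $A^{-1}$ bootstraps the first phase.) Altogether the amortized update cost is $O(n)+O(n^{2\delta})+O(n^{1+\delta})+O(n^{\omega(1,\delta,1)-\delta})=\Ot(n^{\omega(1,\delta,1)-\delta}+n^{1+\delta})$, using $\delta<1$, with $O(n^\delta)$ query time.

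The main obstacle is twofold. First, attaining an $O(n^\delta)$ query rather than the $O(n^{2\delta})$ one would incur by evaluating $M_{ab}-u^\top W_t v$ directly forces the precomputation $L_t=(MU_t)W_t$, and hence the careful $O(nt)$-time incremental maintenance of $L_t$ via the bordered-inverse identities; this is where most of the bookkeeping lies. Second, the only amortized component above is the once-per-phase reconstruction via rectangular matrix multiplication, performed in a coordinated way every $\lceil n^\delta\rceil$ updates; this is removed by the standard de-amortization technique of maintaining two interleaved lazy representations offset by half a phase and spreading the rectangular multiplication evenly over the following phase, exactly as in~\cite{Sankowski04, BrandNS19}. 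This gives the claimed worst-case bounds.
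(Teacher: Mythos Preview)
Your proof is correct and follows precisely the lazy-updates construction of~\cite{Sankowski04} that the paper cites (but does not itself prove); the paper's brief discussion of the internals---phases of $n^\delta$ updates, explicit recomputation of the inverse at phase end via an $n\times n^\delta$ by $n^\delta\times n$ rectangular product, and query time proportional to the phase length---matches your exposition exactly. Your treatment is in fact more detailed than what the paper records, in particular the careful $O(nt)$ incremental maintenance of $L_t$ via the bordered-inverse identities and the Sylvester-determinant argument showing the Schur complement never vanishes.
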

Above, $\omega(1,\delta,1)\geq 2$ denotes the rectangular matrix multiplication exponent (see~\cite{GallU18}), i.e., a value
such that one can multiply an $n\times n^{\delta}$ matrix by an $n^{\delta}\times n$ matrix in $\Ot\left(n^{\omega(1,\delta,1)}\right)$ time.
Here, the time
is measured in field operations.
By applying Theorem~\ref{t:inverse-tradeoff} with $\delta\approx 0.529$ such that $\omega(1,\delta,1)=1+2\delta$
to the matrix $I-A(G)$, one
obtains a Monte Carlo randomized fully dynamic reachability algorithm for DAGs
with $\Ot(n^{1.529})$ worst-case update and $O(n^{0.529})$ query time, a trade-off
obtained first in~\cite{DemetrescuI04}.
Sankowski~\cite{Sankowski04} used the dynamic matrix inverse approach to general graphs as well;
however, more delicate arguments beyond path counting are needed then.

To continue, we need to discuss some of the internals of the data structure of Theorem~\ref{t:inverse-tradeoff}~\cite[Section~6]{Sankowski04}.
That data structures operates in phases of $n^\delta$ updates.
At the end of each phase, the inverse $A^{-1}$ is \emph{explicitly} recomputed
from (1) the explicitly stored inverse $(A_0)^{-1}$ of the matrix $A_0$ from the beginning of the phase,
and (2)~the~$n^\delta$ updates in the current phase, via rectangular matrix multiplication.
This is the sole reason why the term $n^{\omega(1,\delta,1)-\delta}$ appears in the update bound.
In particular, at the beginning of each phase, we could also recompute the inverse of
the current matrix~$A$ \emph{from scratch} in $O(n^\omega)$ time and thus obtain a slightly worse
update bound of $\Ot(n^{\omega-\delta}+n^{1+\delta})$, which in turn leads to the $\Ot(n^{(\omega+1)/2})=O(n^{1.687})$ update bound
if optimized wrt.~$\delta$.
The query time is proportional to the phase length $n^\delta$.

Speaking more generally, if we could explicitly recompute the maintained inverse
at any time in~$T$ time, then by following the approach behind~Theorem~\ref{t:inverse-tradeoff},
for any parameter $t\in [1,n]$ (denoting the phase length) we could obtain a data structure
with $\Ot(T/t+nt)$ worst-case update time and $O(t)$ query time.
For $T=\Omega(n)$, it only makes sense to use $t\in \left[1,\sqrt{T/n}\right]$, and the update
bound then simplifies to $\Ot(T/t)$.
To obtain a fully dynamic reachability algorithm for sparse DAGs this way,
we use this observation in combination with the below folklore lemma.

\begin{lemma}
  In a DAG $G$, all-pairs path counts modulo $p$ (and thus $(I-A(G))^{-1}$ over $\field=\mathbb{Z}/p\mathbb{Z}$ by Lemma~\ref{l:matrix-inverse}) can be computed
  in $O(mn)$ time.
\end{lemma}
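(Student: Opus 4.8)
The plan is to reduce the all-pairs path count to $n$ independent single-source counting problems, each solved by a linear-time dynamic program over a topological ordering.

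First I would compute, once, a topological order $v_1,\ldots,v_n$ of $G$ in $O(n+m)$ time, so that every edge $v_iv_j\in E$ satisfies $i<j$. Then, for each fixed source $s\in V$, I would compute $f_s(v):=$ the number of $s\to v$ paths in $G$ taken modulo $p$, processing the vertices in topological order. Initialize $f_s(v)=0$ for all $v$; set $f_s(s)=1$ to account for the trivial length-$0$ path; and, for every $v$ appearing after $s$ in the ordering, set
\[
  f_s(v) \;=\; \sum_{u \,:\, uv \in E} f_s(u) \pmod{p}.
\]
This is correct because every $s\to v$ path of positive length has a unique final edge $uv$, and, $G$ being acyclic with edges respecting the ordering, all in-neighbors $u$ of $v$ precede $v$ and have already been assigned their final value; moreover every vertex preceding $s$ receives $0$, since a DAG has no path from $s$ to an earlier vertex. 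Collecting the values $f_s(v)$ over all $s,v\in V$ yields exactly $(I-A(G))^{-1}$ over $\field=\mathbb{Z}/p\mathbb{Z}$ by Lemma~\ref{l:matrix-inverse}.

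For the running time, computing all values $f_s(\cdot)$ for a single source $s$ costs $O(n)$ for the initialization plus $O\left(\sum_{v\in V} |\{u:uv\in E\}|\right)=O(m)$ for evaluating the recurrences, where each addition is a unit-cost operation in $\field$; repeating over all $n$ sources gives $O(n(n+m))=O(mn)$ under the standing assumption $m\ge n$, and the single topological sort is dominated. There is essentially no obstacle here: the only delicate points are choosing the base case so that the diagonal entries come out as at least $1$ — matching the identity $(I-A)^{-1}=\sum_{k=0}^{n-1}A^k$ used to prove Lemma~\ref{l:matrix-inverse} — and reducing every partial sum modulo $p$ so that the (possibly exponentially large) true path counts never enlarge the machine word; both are handled directly by the description above.
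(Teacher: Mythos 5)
Your proof is correct and essentially mirrors the paper's: the paper fixes a target $t$ and counts $u\to t$ paths via the reverse topological order, while you fix a source $s$ and count $s\to v$ paths via the forward order — a symmetric and equivalent formulation of the same dynamic program, with the same $O(mn)$ bound.
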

\begin{proof}
  Fix $t\in V$. Let $c(u)$ be the number of $u\to t$ paths in $G$ modulo $p$. Then, $c(t)=1$ and for $u\neq t$ we have
  $c(u)=\left(\sum_{uv\in E}c(v)\right)\bmod{p}$. As a result, all the values $c(\cdot)$ can be computed
  inductively by considering the vertices in reverse topological order in time $O\left(\sum_{u\in V}\deg(u)\right)=O(m)$.
  The desired all-pairs counts are obtained by repeating this for all possible targets $t$.
\end{proof}
By the above lemma, for the case of the matrix $I-A(G)$ and acyclic $G$, we have $T=O(mn)$.
Theorem~\ref{t:sparse-reach} follows.

\bibliography{references}

\appendix

\section{Further variants of the fully dynamic shortest paths data structure}

\subsection{Unweighted digraphs}\label{s:unweighted}
Similarly as in the case of previous fully dynamic APSP data structures~\cite{AbrahamCK17, GutenbergW20b}, improved
bounds can be obtained if the graph $G$ is unweighted. This is simply because the preprocessing of Lemma~\ref{l:prep}
can be completed in $O(mn)$ time instead of $O(mnh)$ time.
Indeed, in an unweighted graph, the shortest $h$-hop-bounded $s,t$ path, if exists,
coincides with the (globally) shortest $s,t$ path. As a result, the Bellman-Ford-based computation
can be replaced with breadth-first search.
Similarly, the collection of paths $\Pi$ can be represented using $n$ BFS trees and thus
one can achieve quadratic space without resorting to Lemma~\ref{l:space}.

For unweighted graphs, the update bound becomes $\Ot(m\Delta+mn/\Delta+mn/h+nm^2h/\tau+\Delta\tau)$,
whereas the query time remains $\Ot(\Delta+nmh/\tau+n/h)$.
For $\Delta=h=n^{1/4}$ and $\tau=mn^{1/2}$ the update and query time bounds
become $\Ot(mn^{3/4})$ and $\Ot(n^{3/4})$, respectively.

\subsection{A slight tradeoff}\label{s:tradeoff}
In the basic variant of the data structure, it is not clear whether pushing the update time below $\Ot(n^{4/5})$ is possible even
at the cost of increasing the query time. Here, we sketch that a slight tradeoff is indeed possible
with another trick of~\cite[Section~4.1]{GutenbergW20b}: to delegate handling paths through the congested set
to the data structure of~\cite[Section~3]{AbrahamCK17}. For simplicity, assume again that the edge weights are non-negative.
Since that data structure, in turn, is tailored to dense
graphs, we instead use the following sparse variant implicit in~\cite{Karczmarz21}.
\begin{lemma}\label{l:ack}{\upshape\cite{AbrahamCK17, Karczmarz21}}
  Let $G=(V,E)$ be a directed graph and let $C\subseteq V$. Let $h\in [1,n]$.
  In $\Ot(|C|mh)$ time one can build a data structure supporting the following.

  For any query set ${D\subseteq V}$, update the data structure so that it supports
  queries computing the length of some $s\to t$ path of length at most $\min_{c\in C}\{\dist^h_{G-D}(s,c)+\dist^h_{G-D}(c,t)\}$ for any $s,t\in V$.
  The worst-case update time is $\Ot(|D|mh)$ and the query time is $O(|C|)$.
\end{lemma}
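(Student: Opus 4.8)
The statement attributes the lemma to \cite{AbrahamCK17, Karczmarz21}, so the intended proof is a reference-and-sketch: recall the hop-bounded distance data structure of Abraham, Chechik, and Krinninger, and import the sparsity-aware refinements of Karczmarz. I would begin by recalling the key primitive behind \cite{AbrahamCK17}: for a fixed pivot set $C$, one maintains, for each $c \in C$, a bounded-hop shortest path tree out of $c$ and into $c$ in $G - D$, but crucially one does \emph{not} rebuild these from scratch on each update. Instead, the data structure maintains these trees incrementally under the monotone growth of the ``forbidden'' set, using the observation that a hop-bounded Bellman–Ford/Dijkstra-hybrid computation from a single source can be restarted cheaply when only $|D|$ vertices are blocked, because only the subtrees hanging below affected vertices need recomputation. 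The preprocessing cost is the cost of building $|C|$ such trees with hop bound $h$, i.e.\ $\Ot(|C| \cdot mh)$, since each bounded-hop single-source computation on a graph with $m$ edges and hop cap $h$ costs $\Ot(mh)$ (one runs $h$ rounds of edge relaxation, each sweeping all $m$ edges, with a priority-queue tie-break to keep hop counts minimal as assumed in the main construction).

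**The update step.** Given a query set $D \subseteq V$, I would describe how to refresh all $|C|$ out-trees and $|C|$ in-trees so that the tree rooted at $c$ encodes $\dist^h_{G-D}(c,\cdot)$ (and symmetrically). The point of the $\Ot(|D|mh)$ bound rather than $\Ot(|C|mh)$ is that the recomputation is localized: when we delete the edges incident to $D$, in each of the $O(|C|)$ trees only the portions that depended on a vertex of $D$ become stale. Following \cite{AbrahamCK17}, one reinitializes the relaxation queue with the ``boundary'' of the damaged region and reruns the $h$-round bounded-hop relaxation; but here is where the \cite{Karczmarz21} sparsity trick enters — the work is charged not per tree but globally via a degree-weighted potential, so that the aggregate cost of repairing all $|C|$ trees is $\Ot(|D| \cdot mh)$ rather than $\Ot(|C| \cdot mh)$. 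After the trees are refreshed, a query $(s,t)$ simply scans all $c \in C$, reads off $\dist^h_{G-D}(s,c)$ from the in-tree at $c$ and $\dist^h_{G-D}(c,t)$ from the out-tree at $c$, sums, and returns the minimum; since the returned value is realized by an actual concatenated walk in $G-D$, it is the length of a genuine $s \to t$ path of length at most $\min_{c \in C}\{\dist^h_{G-D}(s,c) + \dist^h_{G-D}(c,t)\}$, exactly as claimed, and this costs $O(|C|)$ time.

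**The main obstacle.** The delicate part is making the $\Ot(|D|mh)$ update bound rigorous: a naive analysis only yields $\Ot(|C|mh)$ because, in the worst case, blocking a single high-degree vertex in $D$ could in principle invalidate large subtrees in every one of the $|C|$ trees. The resolution — which is the heart of \cite{Karczmarz21} and which I would invoke rather than reprove in full — is a congestion/charging argument: one shows that the total size of all recomputed subtrees across all $|C|$ trees, weighted appropriately by vertex degrees, is dominated by a quantity that scales with $|D|$ and not $|C|$, essentially because each affected vertex can be ``responsible'' for bounded reconstruction work amortized over the trees. I would state this as the content imported from \cite[implicit in the proof of the main result]{Karczmarz21} and note that plugging it in gives the claimed worst-case update time. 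The final bookkeeping — that hop counts stay minimal under the lexicographic weight trick, that the concatenated path is a genuine path (not just a walk) once zero/negative-cycle issues are handled as in Section~\ref{s:negative}, and that queries only touch $O(|C|)$ precomputed cells — is routine given the machinery already set up in the body of the paper.
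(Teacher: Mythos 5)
Your high-level intuition is right — the crux is a congestion/charging argument inherited from \cite{Karczmarz21} — but the data structure you sketch is not the one the paper builds, and the gap matters: without the missing structural idea, the charging argument you invoke has nothing to grab onto, and the correctness argument you give does not match what the data structure can actually return.

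Concretely, the paper does \emph{not} maintain trees encoding $\dist^h_{G-D}(c,\cdot)$ for each $c\in C$. Instead, it picks up to $2|C|$ centers $c_1,\ldots,c_\ell$, alternating between an unused vertex of $C$ and the currently most congested vertex of $V\setminus\{c_1,\ldots,c_{i-1}\}$, and for each $c_i$ builds $2h$-hop-bounded shortest-path trees in the \emph{shrinking} graph $G-\{c_1,\ldots,c_{i-1}\}$ while maintaining $\alpha(\cdot)$-congestion as in Lemma~\ref{l:prep}. This nesting is precisely what caps the congestion of every vertex and thereby makes the Lemma~\ref{l:dijkstra-rebuild}-style repair cost scale with $|D|$ and not with $|C|$; in your version, with $|C|$ independent trees all rooted in the same graph, there is no mechanism that prevents a single vertex of $D$ from appearing near the roots of all $|C|$ trees, and ``invoke the charging argument of \cite{Karczmarz21}'' does not fill that hole because that argument is specifically about the nested-removal construction. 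As a consequence, the structure also does not compute $\dist^h_{G-D}(s,c)$ and $\dist^h_{G-D}(c,t)$; after an update with set $D$ it only furnishes upper bounds of the form $\dist_{G-(D\cup\{c_1,\ldots,c_{i-1}\})}^{2h}(s,c_i)$ and $\dist_{G-(D\cup\{c_1,\ldots,c_{i-1}\})}^{2h}(c_i,t)$. Your one-line correctness claim (``the returned value is realized by an actual concatenated walk ... exactly as claimed'') therefore needs the paper's separate argument: take the optimal $c^*\in C$, look at the corresponding $\leq 2h$-hop concatenated path $Q$, and pick the \emph{first} center $c_j$ appearing on $Q$; since $Q\subseteq G-(D\cup\{c_1,\ldots,c_{j-1}\})$, the value $y_j$ computed via $c_j$ already certifies a path of length at most $\len(Q)\leq\dist^h_{G-D}(s,c^*)+\dist^h_{G-D}(c^*,t)$. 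Without the nested center order, this argument is unavailable, and without it the relaxed quantities the structure can actually maintain do not obviously certify the bound in the lemma statement.

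A smaller point: the phrase ``incrementally under the monotone growth of the forbidden set'' misreads the interface; $D$ is an arbitrary fresh query set each time, not a monotonically growing deletion set, and the $\Ot(|D|mh)$ bound is for rebuilding the representation from the preprocessed state given a new $D$, not for maintaining it across a deletion sequence.
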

\begin{proof}[Proof sketch]
  For at most $2|C|$ \emph{centers} $c_1,\ldots,c_\ell$, repeatedly find shortest $2h$-hop-bounded paths from/to~$c_i$
  in $G-\{c_1,\ldots,c_{i-1}\}$. While this computation proceeds, maintain vertex congestions $\alpha(\cdot)$ as in Lemma~\ref{l:prep}.
  When choosing the subsequent centers $c_i$, alternate between picking an unused vertex from $C$
  and the most congested vertex of $V\setminus \{c_1,\ldots,c_{i-1}\}$,
  until all vertices of $C$ are used.
  This preprocessing costs $O(\ell mh)=O(|C|mh)$ time. 

  Given the above preprocessing, one can prove that by proceeding as in Lemma~\ref{l:dijkstra-rebuild}, in $\Ot(|D|mh)$ time
  one can recompute a representation of paths $s\to c_i$ of length at most $\dist_{G-(D\cup \{c_1,\ldots,c_{i-1}\})}^{2h}(s,c_i)$
  and analogous paths $c_i\to t$, for all $i$ and $s,t\in V$.

  Upon a query $(s,t)$, in $O(\ell)=O(|C|)$ time we can find an $s\to t$ path
  of length at most $y^*=\min_{i=1}^\ell\{y_i\}$, where $y_i:=\dist_{G-(D\cup \{c_1,\ldots,c_{i-1}\})}^{2h}(s,c_i)+\dist_{G-(D\cup \{c_1,\ldots,c_{i-1}\})}^{2h}(c_i,t)$.
  To see that this is enough, 
  let $c^*\in C$ be such that $\dist_{G-D}^h(s,c^*)+\dist_{G-D}^h(c^*,t)$ is minimum.
  Let $j$ be minimum index such that the corresponding $\leq 2h$-hop path $Q=s\to c^*\to t$ contains the center~$c_j$.
  Then we have $Q\subseteq G-(D\cup \{c_1,\ldots,c_{j-1}\})$ and thus $y^*\leq y_j\leq \len(Q)$.
\end{proof}

Note that by computing shortest-paths trees from and to a randomly sampled $\Ot(n/h)$-sized
hitting set~$H$ we can in fact handle ``long'' shortest
paths in the current graph $G$, and not only in $G-(C\cup D)$. As a result,
we don't need to recompute full shortest paths trees from $C$ -- instead, it would be enough
to consider short paths in $G-D$ through $C$ upon query.
This is what we use Lemma~\ref{l:ack} for. Every $\Delta$ updates, when a new
phase starts, a fresh congested set $C$ is computed.
We additionally initialize the data structure of Lemma~\ref{l:ack} for the current graph $G$
and the congested set $C$. This way, that data structure is always off from the current $G$
by at most $\Delta$ updates, and thus can be updated in $\Ot(\Delta mh)$ time. Again, the data structure of Lemma~\ref{l:ack} can be reinitialized
in such a way that the additional worst-case cost incurred is $\Ot(|C|mh/\Delta)$. The full worst-case
update time becomes:
\begin{equation*}
  \Ot(m\Delta+mnh/\Delta+mn/h+\Delta\tau+m^2nh^2/(\tau\Delta)+\Delta hm).
\end{equation*}
Balancing as before, for $\Delta=h^2$ and $\tau=mn/h^3$, we obtain the update bound
$\Ot(mn/h+mh^3)$.
Note that this bound is $\Omega(mn^{3/4})$ for any $h$.

The query bound unfortunately remains $\Ot(\Delta+|H|+mnh/\tau)=\Ot(n/h+h^4)$.
If we aim at serving $\Theta(n)$ queries per update and the graph is sparse, then we get
no improvement over the basic approach.
However, for a desired query time of $\Ot(t)$, where $t\in [n^{4/5},n]$, we can achieve
$\Ot(mn/t^{1/4})$ worst-case update time this way.

\end{document}